\title[Bounds and Constructions for Insertion and Deletion Codes]{Bounds and Constructions for Insertion and Deletion Codes}
\author{Shu Liu}\address{National Key Laboratory of Science and Technology on Communications, University of Electronic Science and Technology of China, Chengdu, China} \email{shuliu@uestc.edu.cn}
\author{Chaoping Xing} \address{School of Electronic Information and Electric Engineering, Shanghai Jiao Tong University, Shanghai, China} \email{xingcp@sjtu.edu.cn}
\date{}
\date{}
\newtheorem{lemma}{Lemma}[section]
\newtheorem{theorem}[lemma]{Theorem}
\newtheorem{cor}[lemma]{Corollary}
\newtheorem{defn}{Definition}
\newtheorem{open}{Open Problem}
\theoremstyle{remark}
\newtheorem{rmk}{Remark}
\renewcommand{\epsilon}{\varepsilon}
\renewcommand{\le}{\leqslant}
\renewcommand{\ge}{\geqslant}
\newcommand{\vnote}[1]{}
\def\ZZ{\mathbb{Z}}
\def\PP{\mathbb{P}}
\def\F{\mathbb{F}}
\def \mC {\mathcal{C}}
\def \mA {\mathcal{A}}
\def \mA {\mathcal{A}}
\def \mC {\mathcal{C}}
\def \Xi {{X^{[i]}}}
\newcommand{\Ga}{\alpha}
\newcommand{\Gb}{\beta}
\newcommand{\Gd}{\delta}
\newcommand{\Gs}{\sigma}
\def\bGa {{\bf \Ga}}
\def \bi {{\bf i}}
\def \bj {{\bf j}}
\def \ba {{\bf a}}
\def \bb {{\bf b}}
\def \bc {{\bf c}}
\def \bi {{\bf i}}
\def \bx {{\bf x}}
\def \bu {{\bf u}}
\def \bv {{\bf v}}
\def \bo {{\bf 0}}
\def\supp {{\rm supp }}
\def\Aut {{\rm Aut }}
\def\AGL {{\rm AGL}}
\def\wt{{\rm wt}}
\def\LCS{{\rm LCS}}
\def\RS{{\textsf{RS} }}
\def\wt{{\rm wt}}
\begin{document}

\maketitle

\begin{abstract} Insertion and deletion (insdel for short) codes have recently attracted a lot of attention due to their applications in many interesting fields such as DNA storage, DNA analysis, race-track memory error correction and language processing. The present paper mainly studies limits and constructions of insdel codes. The paper can be divided into two parts. The first part focuses on  various bounds, while the second part concentrates on constructions of insdel codes.

Although the insdel-metric Singleton bound has been derived before, it is still unknown if there are any nontrivial codes achieving this bound. Our first result shows that any nontrivial insdel codes do not achieve the insdel-metric Singleton bound. The second bound shows that every $[n,k]$ Reed-Solomon code has insdel distance upper bounded by $2n-4k+4$ and it is known in literature that an $[n,k]$ Reed-Solomon code can have insdel distance $2n-4k+4$ as long as the field size is sufficiently large. The third bound shows a trade-off between insdel distance and code alphabet size for  codes achieving the Hamming-metric Singleton bound. In the second part of the paper, we first provide a non-explicit construction of nonlinear codes that can approach the insdel-metric Singleton bound arbitrarily when the code alphabet size is sufficiently large. The second construction gives two-dimensional Reed-Solomon codes of length $n$ and insdel distance $2n-4$ with field size $q=O(n^5)$.

The  non-explicit construction of insdel codes is based on constant-weight $L^1$-codes that are  introduced in this paper. We first establish a relation between constant-weight $L^1$-codes and insdel codes. Based on this relation, we construct constant-weight $L^1$-codes with reasonable parameters and subsequently give insdel codes approaching the insdel-metric Singleton bound. Via automorphism group of rational function field, we provide a necessary and sufficient condition under which a two-dimensional Reed-Solomon code of length $n$ has insdel distance $2n-4$. Based on this criterion, we present a construction of $q$-ary two-dimensional Reed-Solomon codes of length $n$ and insdel distance $2n-4$ with $q=O(n^5)$. Though this is worse than the current best field size, we provide a new angle to look into the problem.

\end{abstract}

\section{Introduction}
Classical error-correcting codes under the Hamming metric are widely used to correct substitution and erasure errors. A different class of codes, called insertion and deletion (insdel for short) codes, are designed to correct synchronization errors~\cite{HS2017,HSS2018}  in communication systems caused by the loss of positional information of the message. Insdel codes have recently attracted lots of attention due to their applications in many interesting fields such as DNA storage, DNA analysis~\cite{JHSB2017,XW2005}, race-track memory error correction \cite{CKVVY2017} and language processing \cite{BM2000,O2003}. The study of codes with insdel errors were pioneered by Levenshtein, Varshamov and Tenengolts in the 1960s~\cite{VT65,L1965,L1967,T1984}. Afterwards, various aspects of insdel codes such as bounds, constructions and decoding algorithms have been studied in literatures (see \cite{SY2002,B1995,M1998,Y2001,SWY2002,TS2007,AFC2007,BGZ2018,HS2021}).

In the insdel setting, the distance between two words is the smallest number of insertions and deletions needed to transform one codeword into the other codeword. The minimum insdel distance of a code is the minimum insdel distance among its codewords. %It is an important parameter which shows its insdel error-correcting capability.
Like codes under other metrics, we are interested in  codes with large minimum distance as well as large code size. As one can imagine, similar to the other metrics, there is a trade-off between code size and insdel distance. Thus, optimizing this trade-off is one of the central problems in the topic of insdel codes.

%It would certainly be great if the insdel code size and the minimum insdel distance as large as possible for a fixed code length. Similar to the Hamming metric case, there is a trade-off between code size and insdel distance. It is natural to consider the problems of providing bounds for insdel codes and constructing insdel codes achieving bounds with equality.

Reed-Solomon codes are the most widely used family of codes in both theory and practice. Study of Reed-Solomon codes under insdel-metric has received great attention in the last few years \cite{DLTX2021,CZ2021,CST2021,WMS2004,MS2007,LT2021,TS2007}. Although Reed-Solomon codes achieve the best trade-off between error correction and rate under the Hamming metric, one wonders if this class of codes also has good performance under the insdel-metric.

\subsection{Known results}
There have been a few constructions of insdel codes  in early literatures. Sloane~\cite{S2002}  gave constructions of single deletion correcting codes. He employed the exhaustive search and reported the largest single deletion correcting binary codes of various given lengths. He also showed that the Varshamov-Tenengolts codes~\cite{VT65} are capable of correcting one deletion. Bours~\cite{B1995} constructed  $2$-deletion correcting codes of length $4$ and $3$-deletion correcting codes of length $5$ via combinatorial designs. Furthermore, Mahmoodi~\cite{M1998} gave more constructions for $3$-deletion correcting codes of length $5$. Yin~\cite{Y2001} and Shalaby et al.~\cite{SWY2002} provided constructions of $4$-deletion correcting codes of length $6$.

The Singleton bound for insdel-metric states that a $q$-ary $(n,M)$-code with insdel distance $d$ must obey $M\le q^{n-\frac d2+1}$ or $k\le n-\frac d2+1$ with $k=\log_qM$. So far, except for some trivial codes such as $d=2$ or $d=2n$, one does not know if there are any insdel codes achieving the Singleton bound. This gives the following open problem.
\begin{open}
Are there any $q$-ary $(n,M)$-codes with insdel distance $d$ achieving $M= q^{n-\frac d2+1}$ for $2<d<2n$?
\end{open}
On the other hand, Levenshtein \cite{L2002} derived a lower bound stating that there exists a $q$-ary $(n,M)$-code of insdel distance $d$ with
\[\log_qM\ge n-\frac d2-O\left(\frac{n\log n}{\log q}\right).\]
 This means that we have $q$-ary $(n,M)$-codes of insdel distance $d$ with $k=\log_qM$ arbitrarily approaching $n-\frac d2$ when $q$ is large enough with respect to the length $n$. In view of this result, the following open problem arises.
\begin{open}
Are there any $q$-ary $(n,M)$-codes of insdel distance $d$ with $k=\log_qM$ arbitrarily approaching $n-\frac d2+1$ when $q$ is large enough with respect to the length $n$?
\end{open}
In the above discussions, we assume that the code alphabet size $q$ is sufficiently large. If we consider the typical asymptotical scenario where the code alphabet size $q$ is fixed and the code length $n$ tends to infinity, then by using the Synchronization techniques Haeupler and  Shahrasbi showed in \cite{HS2017} that there are $q$-ary $(n,M)$-codes of insdel distance $d$ with
\[\log_qM=n-\frac d2-O\left(\frac{1}{q}\right)n.\]
However, if we come to linear codes, the bound is suddenly reduced to half. More precisely speaking, it was shown in \cite{CGHL2021} that any $q$-ary $[n,k]$-linear code with insdel distance $d$ must obey
\[k\le \frac12\left(n-\frac d2\right)+o(n).\]
This means that linear codes can only achieve half of the Singleton bound though nonlinear codes can arbitrarily approach the Singleton bound.

So far, we have gotten the sense that the bigger code alphabet size is, the larger insdel distance could be, i.e., there is a trade-off between insdel distance and code alphabet size. There is some implicit research on this problem for general codes \cite{L2002,HS2017,CGHL2021}. For Reed-Solomon codes, the problem was explicitly investigated in the paper \cite{CST2021} where they show some lower and  upper bounds on code alphabet size for given minimum insdel distance.  More precisely speaking, it was shown in \cite{CST2021} that a $q$-ary $[n,k]$ Reed-Solomon code with insdel distance $2n-4k+4$ has the smallest alphabet size $q$ satisfying
\[\Omega\left(\left(\frac{n}{k^2}\right)^{\frac{2k-1}{k-1}}\right)\le q\le O\left(n^{4k-2}\right).\]
Study of the trade-off between insdel  distance and code alphabet size for other classes of codes such as Hamming-metric Singleton-optimal codes has not been done in literature (here we refer Hamming-metric Singleton-optimal codes to those achieving the Singleton bound under the Hamming metric).

%%%%%%%%%%%%%%%%%%%%%%%%%%%%%%%%%%%%%%%%%%%%%%%%
Due to both theoretical and practical interests, Reed-Solomon codes under the Hamming metric have received tremendous attention in the history of coding theory. It is natural to ask how Reed-Solomon codes perform under the insdel metric. The topic was first studied in ~\cite{WMS2004,MS2007} for small length. The first general result on Reed-Solomon codes under insdel-metric with large length was given by Tonien et al.~\cite{TS2007}. Some subsequent work on two-dimensional Reed-Solomon codes under insdel-metric was conducted by several authors \cite{DLTX2021,LT2021,CZ2021}. As mentioned above,  Reed-Solomon codes with large dimension was first studied in \cite{CST2021} where they provide a necessary and sufficient condition for which an $[n,k]$ Reed-Solomon code has insdel distance at least $2n-4k+4$. In particular, an upper bound on code alphabet size $q$ for an $[n,2]$ Reed-Solomon code is given, i.e., $q=O(n^4)$. Although it was shown in \cite{CST2021} that an $[n,k]$ Reed-Solomon code with insdel distance $d$ can achieve $d=2n-4k+4$, or equivalently, $k=\frac12\left(n-\frac d2\right)+2$, one does not know if an $[n,k]$ Reed-Solomon code with insdel distance $d$ can have insdel distance beyond $2n-4k+4$. Note that, as mentioned earlier, it was shown in \cite{CGHL2021} that any $q$-ary $[n,k]$-linear code with insdel distance $d$ must obey $d\le 2n-4k+o(n).$ This gives the third open problem.

\begin{open}
Are there any $q$-ary $[n,k]$-codes with insdel distance $d$ beyond $2n-4k+4$ for $1<k< n$.
\end{open}

\subsection{Our contributions}
Our main results can be divided into two parts. The first part gives three bounds for insdel codes, while the second part provides  constructions of nonlinear insdel codes and two-dimensional Reed-Solomon codes.

Our first bound confirms that the answer to Open Problem 1 is negative, i.e., any nontrivial codes do not achieve the insdel-metric Singleton bound.
Let $q, n,d\ge 2$ be integers. Let $I_q(n, d)$ denote the largest size $M$ of  a code of length $n$ and minimum insdel distance of at least $d$. For given $n$ and $d$, it is a central coding problem in the topic of insdel codes to determine $I_q(n, d)$. Our first upper bound shows that $I_q(n,d)\le \frac{1}{2}\left(q^{n-\frac{d}{2}+1}+q^{n-\frac{d}{2}}\right)$ for $4\le d\le 2n-2$. Furthermore, we prove that $I_q(n,d)\le q^{n-\frac{d}{2}}$ if $2q\le d\le 2n-2$.

The second upper bound of this paper gives an affirmative answer to the Open Problem 3 for the some parameter regime. Precisely speaking, we show that, if the length $n$ and dimension $k$ satisfies $n\ge \frac{k(k+1)}2+k-3$,  then every Reed-Solomon code  with length $n$ and dimension $k\ge 3$ has insdel distance at most $2n-4k+4.$
On the other hand, it was shown in \cite{CST2021} that, when the field size is sufficiently large, there are $[n,k]$ Reed-Solomon codes  with insdel distance at least $2n-4k+4$.

Our third result of the first part provides a trade-off between field size and insdel distance for Hamming-metric Singleton-optimal codes, i.e.,  every $q$-ary $(n,M)$ Singleton-optimal code $\mC$ has insdel distance at most
$2n-2k+2-2\Gd$ with $k=\log_qM$ and $\Gd\ge 2$ if
\[
q\le \left\{\begin{array}{ll}
{2^{\frac{2k-3}{\Gd-1}}}&\mbox{when $k>\frac{n+1}{3}$},\\
\left(\frac1{2(k-1)!}(n-2k+1)^{k-1}\right)^{\frac1{\Gd-1}} & \mbox{when $k\le\frac{n+1}{3}$}.
\end{array}
\right.
\]
We remark that our bound applies not only to MDS codes, which are linear Hamming-metric Singleton-optimal codes (including Reed-Solomon codes), but also to non-linear Hamming-metric Singleton-optimal codes.

Our first ``constructive" result of the second part provides an affirmative answer to Open Problem 2 by constructing nonlinear insdel codes via constant-weight $L^1$-codes. Note that our construction is not explicit in the sense that we make use of the pigeonhole principle.  Our construction shows that there are $q$-ary $(n,M)$-codes of insdel distance at least $d$ satisfying
\[\log_qM\ge n-\frac d2+1-O\left(\frac{n\log n}{\log q}\right).\]
  This means that there are nonlinear codes with logarithm of the code size arbitrarily approaching the insdel-metric Singleton bound when the code alphabet size is sufficiently large. This  non-explicit construction of insdel codes is based on constant-weight $L^1$-codes that are  introduced in this paper. We first establish a relation between constant-weight $L^1$-codes and insdel codes. Building on this relation, we construct constant-weight $L^1$-codes with reasonable parameters and subsequently give insdel codes approaching the insdel-metric Singleton bound.

The last result of this paper concerns with two-dimensional insdel Reed-Solomon codes. As two-dimensional Reed-Solomon codes of length $n$ can achieve the maximum insdel distance $2n-4$, the main problem is to find the smallest alphabet size. Via automorphism group of rational function fields, we show that two-dimensional Reed-Solomon codes of length $n$ over $\F_q$ with $q=O(n^5)$ can have insdel distance $2n-4$. Though this is worse than the one given in \cite{CST2021}, we provide a new angle to study the problem and hope that we can improve field size through this approach.

\subsection{Organization of the paper} The paper is organized as follows. In Section 2, we introduce some background on codes under the Hamming metric, insdel-metric and $L^1$-metric, as well as rational function fields and their automorphisms. In Section $3$, we derive various bounds on insdel codes: two upper bounds for the insdel codes sizes and one lower bound for the alphabet size. In the last section, we investigate constructions of insdel codes. More specifically, the first construction via constant-weight $L^1$-codes shows that nonlinear insdel codes can  arbitrarily approach the insdel-metric Singleton bound when the code alphabet size is sufficiently large, while the second one provides a construction for $2$-dimensional Reed-Solomon codes of length $n$ achieving insdel distance $2n-4$ with alphabet size $q=O(n^5) $ via automorphism group of rational function fields.

%%%%%%%%%%%%%%%%%%%%%%%%%%%%%%%%%%%%%%%%%%%%%%%%%%%%%%%%%%%%%%%%%%%%%%%%%%%%%%%%%%%%%

\section{Preliminaries}
In this section, we will introduce codes under various metrics and their relations. We also briefly discuss automorphism groups of rational function fields.
\subsection{Codes}
Codes under the Hamming metric have been well studied and much more understood than codes under other metrics. Codes under the insertion and deletion metric was first studied in \cite{VT65,L1965,L1967,T1984} in 1960s. However, codes under $L^1$-metric has very little results in literatures.
\subsubsection{Codes under Hamming  metric}
For an integer $q\ge 2$, denote by $[q]$ the set $\{1,2,\dots,q\}$. We denote by $[q]^n$ the set $\{(a_1,a_2,\dots,a_n):\; a_i\in[q]\}$. By abusing notation, we also call an element in $[q]^n$ a vector of length $n$. A $q$-ary code of length $n$ is a subset of $[q]^n$.

 The Hamming distance of two vectors $\bu,\bv\in[q]^n$, denoted by $d_H(\bu,\bv)$,  is defined to be the number of positions where $\bu$ and $\bv$ differ.  The Hamming distance of a $q$-ary $\mC\subseteq[q]^n$ is defined to be $\min\{d_H(\bu,\bv):\; \bu\neq\bv\in\mC\}$. A $q$-ary code $\mC$ with length $n$ size $|\mC|=M$ is denoted by $(n,M)_q$ or simply $(n,M)$ if there is no confusion. A well-known bound for codes under the Hamming metric is the Singleton bound. It says that a $q$-ary $(n,M)$-code with Hamming distance $d$ must obey
 \begin{equation}\label{eq:1}
 M\le q^{n-d+1}.
 \end{equation}
 A code achieving the above bound \eqref{eq:1} is called Hamming-metric Singleton-optimal (or Singleton-optimal under the Hamming metric), while a linear code achieving the above bound \eqref{eq:1} is called an MDS (maximum distance separable) code. The most important MDS codes are Reed-Solomon codes that have found various applications both theoretically and practically. Let us give a formal definition of a Reed-Solomon code below.

 Let $q$ be a prime power, let $\F_q$ be a finite field of size $q$ and $n$ be a positive integer such that $q>n.$  Let  $\Ga_1,\Ga_2,\dots,\Ga_n$ be $n$ pairwise distinct elements of $\F_q$. For an integer $k$ with $1\le k\le n$, denote by $\F_q[x]_{<k}$ the set of polynomials of degree less than $k$. The Reed-Solomon code with evaluation vector $\bGa:=(\Ga_1,\Ga_2,\dots,\Ga_n)$ and dimension $k$ is defined by
 \[\RS_\bGa(n,k):=\{(f(\Ga_1),f(\Ga_2),\dots,f(\Ga_n)):\; f(x)\in\F_q[x]_{<k}\}.\]
By counting the number of roots of polynomials, one can easily verify that $\RS_\bGa(n,k)$ is an MDS code of minimum distance $n-k+1$.
\subsubsection{Codes under insdel-metric}
The insdel distance $d_I(\bu,\bv)$ between two words $\mathbf{u}\in[q]^m,\bv\in [q]^{n}$ is the minimum number of insertions and deletions which is needed to transform $\mathbf{u}$ into $\mathbf{v}.$ It can be verified that $d_I(\cdot,\cdot)$ is indeed a distance in $[q]^n$. A common subsequence of two vectors $\mathbf{u}=(u_1,u_2,\dots,u_n),\bv=(v_1,v_2,\dots,v_n)\in [q]^{n}$ is a sequence $\ba\in [q]^\ell$ with $0\le \ell\le n$ such that there are indices $1\le i_1<i_2<\cdots<i_\ell\le n$ and $1\le j_1<j_2<\cdots<j_\ell\le n$ satisfying $(u_{i_1},u_{i_2},\dots,u_{i_\ell})=\ba=(v_{j_1},v_{j_2},\dots,v_{j_\ell})$. A longest  common subsequence of $\bu,\bv$ is a common subsequence that achieves the largest length. We denote by $\ell_\LCS(\bu,\bv)$ the length of a longest  common subsequence between $\bu$ and $\bv$, namely
\[\ell_\LCS(\bu,\bv)=\max\{0\le\ell\le n:\; \mbox{there is a common subsequence of $\bu,\bv$ of length $\ell$}\}.\]

For two vectors $\mathbf{u},\bv\in [q]^{n}$, longest common subsequences may not be unique, but $\ell_\LCS(\bu,\bv)$ is uniquely determined by the pair $(\bu,\bv)$. In fact, the insdel distance between two vectors can be calculated via the length of their longest common subsequences. Precisely, we have the following result.
\begin{lemma}[\cite{DLTX2021}]\label{lem:2.1} Let $\mathbf{u},\bv\in [q]^{n}$, then one has
\begin{equation} \label{eq:2}
d_I(\mathbf{u},\mathbf{v})= 2n-2\ell_\LCS(\bu,\bv).
\end{equation}
\end{lemma}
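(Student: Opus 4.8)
The plan is to establish the identity by proving the two inequalities $d_I(\mathbf{u},\mathbf{v})\le 2n-2\ell_\LCS(\bu,\bv)$ and $d_I(\mathbf{u},\mathbf{v})\ge 2n-2\ell_\LCS(\bu,\bv)$ separately.

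\textbf{Upper bound.} Let $\ba$ be a longest common subsequence of $\bu$ and $\bv$, so $\ba$ has length $\ell:=\ell_\LCS(\bu,\bv)$. Since $\ba$ is a subsequence of $\bu$, we can obtain $\ba$ from $\bu$ by deleting the $n-\ell$ coordinates of $\bu$ that lie outside the chosen index set, which costs $n-\ell$ deletions. Likewise, since $\ba$ is a subsequence of $\bv$, we can obtain $\bv$ from $\ba$ by $n-\ell$ insertions. Concatenating these two operation sequences transforms $\bu$ into $\bv$ using $2(n-\ell)$ insertions and deletions, so $d_I(\bu,\bv)\le 2n-2\ell$.

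\textbf{Lower bound.} Fix an optimal sequence of $t=d_I(\bu,\bv)$ operations turning $\bu$ into $\bv$, consisting of $a$ deletions and $b$ insertions. Each operation changes the length by exactly $\pm1$ and $|\bu|=|\bv|=n$, hence $a=b$ and $t=2a$. The key step is a normalization: we push all deletions before all insertions without increasing the total count. Whenever an insertion is immediately followed by a deletion, either that deletion removes the symbol just inserted, in which case both operations can be dropped, lowering $t$ by $2$ and contradicting optimality; or it removes a different symbol, in which case the insertion and deletion act at distinct positions and can be swapped. Iterating this bubble-sort argument, the optimal sequence becomes $a$ deletions applied to $\bu$, producing an intermediate word $\bw$ with $|\bw|=n-a$, followed by $a$ insertions transforming $\bw$ into $\bv$. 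Then $\bw$ is a subsequence of $\bu$ (it is obtained from $\bu$ by deletions) and also a subsequence of $\bv$ (reading the insertions backwards), so $n-a\le\ell_\LCS(\bu,\bv)$, whence $t=2a\ge 2n-2\ell_\LCS(\bu,\bv)$. Combining the two bounds gives \eqref{eq:2}.

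I expect the main technical care to go into the normalization step: one must make precise what it means for an insertion and a deletion to commute when they act at overlapping positions of the string, and verify that the induction on the number of insertion-before-deletion inversions terminates. A cleaner alternative for the lower bound, if the bubble-sort bookkeeping becomes unwieldy, is to follow the whole trajectory $\bu=\bw_0,\bw_1,\dots,\bw_t=\bv$ and track the set of original coordinates of $\bu$ that have never been deleted up to step $i$ together with the coordinates of $\bv$ already created; the surviving coordinates at the moment the two sets ``meet'' form a common subsequence of $\bu$ and $\bv$, and counting how many operations are forced against it yields the same estimate. I would present the normalization argument as the primary proof since it is the most transparent.
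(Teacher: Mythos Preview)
Your argument is correct and is the standard proof of this classical identity. Note, however, that the paper does not give its own proof of Lemma~\ref{lem:2.1}: the lemma is stated with a citation to \cite{DLTX2021} and used as a known fact, so there is no proof in the paper to compare against. Your normalization step (pushing deletions ahead of insertions via adjacent swaps, with the cancellation case ruled out by optimality) is the usual way this lower bound is established and needs no further elaboration beyond what you already sketched.
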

The insdel distance of a $q$-ary $\mC\subseteq[q]^n$ is defined to be $\min\{d_I(\bu,\bv):\; \bu\neq\bv\in\mC\}$.
An insdel code over $[q]$ of length $n$, size $M$ and minimum insdel distance $d$ is called an $(n,M)_q$-insdel code with insdel distance $d$.

A code $\mC\subseteq [q]^n$ can correct $t$ deletion errors if and only if it can correct $t$ insertion errors. Due to this equivalence, we use deletion error-correcting capability and insertion error-correcting capability interchangeably. A code $\mC\subseteq [q]^n$ of minimum insdel distance $d$ has insdel error-correcting capability up to $\left\lfloor \frac{{d}-1}{2}\right\rfloor.$

\subsubsection{Codes under $L^1$-metric}
The $L^1$-distance between two vectors $\mathbf{u}=(u_1,u_2,\dots,u_n),$ $\bv=(v_1,v_2,\dots,v_n)\in\ZZ^n$ is defined to be $d_L(\bu,\bv):=\sum_{i=1}^n|u_i-v_i|$. The $L^1$-distance is also called Manhattan distance \cite{EVY2010}. In this paper, instead of considering the whole lattice space $\ZZ^n$, we focus on the following Johnson space
\[J_n(w):=\{(a_1,a_2,\dots,a_n)\in\ZZ^n_{\ge 0}:\; \sum_{i=1}^n{a_i}=w\}.\]
A subset $\mC$ of $J_n(w)$ is called a constant weight $L^1$-code of length $n$ and weight $w$. The minimum distance of $\mC$ is defined to be
\[d_L(\mC)=\min\{d_L(\bu,\bv):\; \bu\neq\bv\in\mC\}.\]
Like codes under other metrics, we are interested in large size of constant weight $L^1$-codes given length $n$, weight $w$ and minimum distance $d$.
 \subsection{Rational function fields and their automorphisms}
Let $F=\F_q(x)$ be the rational function field. Let $\F_q[x]$ denote a polynomial ring over $\F_q$. It is a subring of $F$. For a  polynomial $f(x)$ with factorization $f(x)=\prod_{i=1}^tp_i(x)^{e_i}$, where $e_i\ge 1$ and $p_i(x)$ are polynomials of degree $m_i$, the quotient ring $\F_q[x]/(f(x))$ is isomorphic to the direct product $\prod_{i=1}^t\F_{q^{m_i}}^{e_1}$. The unit group $(\F_q[x]/(f(x)))^*$ is isomorphic to $\prod_{i=1}^t(\F_{q^{m_i}}^*\times\F_{q^{m_i}}^{e_1-1})$. Thus, it has size $\prod_{i=1}^t(q^{m_i}-1)q^{m_i(e_i-1)}$. Any polynomial $g(x)$ with $f(x)\nmid g(x)$ can be viewed as an element of $(\F_q[x]/(f(x)))^*$.

We denote by  $\Aut(F/\F_q)$ the automorphism group of $F$ over $\F_q$, i.e.,
\begin{equation}\label{eq:3}
\Aut(F/\F_q)=\{\Gs: F\rightarrow F |\; \Gs  \mbox{ is an } \F_q\mbox{-automorphism of } F\}.
\end{equation}
It is well known that any automorphism $\Gs\in \Aut(F/\F_q)$ is uniquely determined by the image $\Gs(x)$ of $x$. Moreover, it must the form
\begin{equation}\label{eq:4}
\Gs(x)=\frac{ax+b}{cx+d}
\end{equation}
for some constants $a,b,c,d\in\F_q$ with $ad-bc\neq0$ (see \cite{CC1951}).

For each pair $(a,b)\in\F_q^*\times\F_q$, we denote by $\Gs_{a,b}$ the automorphism defined by
$\Gs_{a,b}(x)={ax+b}$.
Now we consider the affine subgroup $\AGL(F/\F_q)$ of $\Aut(F/\F_q)$ given by
\begin{equation}\label{eq:5}
\AGL(F/\F_q)=\{\Gs_{a,b}:\; (a,b)\in\F_q^*\times\F_q\}.
\end{equation}
For every element $\Ga\in\F_q$, we denote by $P_\Ga$ the zero of the linear polynomial $x-\Ga$. Thus, we define $f(P_\Ga)$ to be $f(\Ga)$ for a polynomial $f(x)\in\F_q[x]$. For $\Gs_{a,b}\in \AGL(F/\F_q)$, we have $\Gs_{a,b}(x-\Ga)=\Gs_{a,b}(x)-\Ga=ax+b-\Ga=a(x-a^{-1}(\Ga-b))$. This implies that ${\Gs_{a,b}}(P_\Ga)=P_{a^{-1}(\Ga-b)}$. Let us denote by $\PP_q$ the set $\{P_\Ga:\; \Ga\in\F_q\}$.

For a polynomial $f(x)=\sum_{i=0}^mf_ix^i$ and $\Gs_{a,b}\in \AGL(F/\F_q)$, we define ${\Gs_{a,b}}(f(x))$ to be $\sum_{i=0}^mf_i(ax+b)^i$.

\begin{lemma}\label{lem:2.2} Let $\AGL(F/\F_q)$ be the affine automorphism  group of $F/\F_q$. Then we have
\begin{itemize}
\item[{\rm (i)}] $\AGL(F/\F_q)$ is $2$-transitive on the set $\PP_q$;
\item[{\rm (ii)}] An automorphism $\Gs\in \AGL(F/\F_q)$ is completely determined by its images $\Gs(P_{\Ga_1})$ and $\Gs(P_{\Ga_2})$ for any two distinct elements $\Ga_1,\Ga_2\in\F_q$;
    \item[{\rm (iii)}] $f(P_\Ga)=\Gs(f)(\Gs(P_\Ga))$ for any $f(x)\in\F_q[x]$, $\Ga\in\F_q$ and $\Gs\in \AGL(F/\F_q)$;
     \item[{\rm (iv)}] If $\Gs\in\AGL(F/\F_q)$ is not the identity map, then it has at most one fixed point in $\PP_q$.
\end{itemize}
\end{lemma}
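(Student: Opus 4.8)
The plan is to verify the four items essentially by explicit computation with the affine maps $\Gs_{a,b}(x)=ax+b$, using the action formula $\Gs_{a,b}(P_\Ga)=P_{a^{-1}(\Ga-b)}$ derived just before the lemma. For item (i), I would fix two distinct points $P_{\Ga_1},P_{\Ga_2}$ and two distinct target points $P_{\Gb_1},P_{\Gb_2}$, and solve the system $a^{-1}(\Ga_1-b)=\Gb_1$, $a^{-1}(\Ga_2-b)=\Gb_2$ for $(a,b)$. Subtracting the two equations gives $a^{-1}(\Ga_1-\Ga_2)=\Gb_1-\Gb_2$, and since $\Ga_1\neq\Ga_2$ and $\Gb_1\neq\Gb_2$ this determines $a=(\Ga_1-\Ga_2)(\Gb_1-\Gb_2)^{-1}\in\F_q^*$ uniquely; then $b=\Ga_1-a\Gb_1$ is forced. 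Hence there is exactly one $\Gs_{a,b}\in\AGL(F/\F_q)$ sending the ordered pair $(P_{\Ga_1},P_{\Ga_2})$ to $(P_{\Gb_1},P_{\Gb_2})$, which proves both $2$-transitivity and item (ii) simultaneously (existence gives (i), uniqueness gives (ii)).

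For item (iii), I would simply chase the definitions: for $\Gs=\Gs_{a,b}$ one has $\Gs(f)(x)=\sum_i f_i(ax+b)^i$, and $\Gs(P_\Ga)=P_{a^{-1}(\Ga-b)}$, so $\Gs(f)(\Gs(P_\Ga))$ evaluates $\sum_i f_i(ax+b)^i$ at $x=a^{-1}(\Ga-b)$, giving $\sum_i f_i(a\cdot a^{-1}(\Ga-b)+b)^i=\sum_i f_i\Ga^i=f(\Ga)=f(P_\Ga)$. This is the statement that evaluation is compatible with the automorphism action, and it is a one-line substitution once the conventions are unwound. For item (iv), a fixed point $P_\Ga$ of $\Gs_{a,b}$ means $a^{-1}(\Ga-b)=\Ga$, i.e. $\Ga-b=a\Ga$, i.e. $(1-a)\Ga=b$. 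If $a\neq 1$ this has the unique solution $\Ga=b(1-a)^{-1}$; if $a=1$ then the condition becomes $b=0$, but then $\Gs_{1,0}$ is the identity, which is excluded. So a non-identity affine automorphism has at most one fixed point in $\PP_q$.

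The only mild subtlety — and the place I would be most careful — is the bookkeeping in (i) and (ii): one must check that the solved values $(a,b)$ genuinely lie in $\F_q^*\times\F_q$ (this is where the hypotheses $\Ga_1\neq\Ga_2$ and $\Gb_1\neq\Gb_2$ are used, to guarantee $a\neq 0$), and that ``$2$-transitive'' is interpreted as acting on \emph{ordered} pairs of distinct points, as is standard. Everything else is direct substitution into $\Gs_{a,b}(P_\Ga)=P_{a^{-1}(\Ga-b)}$, so I expect no real obstacle; the content of the lemma is that $\AGL(F/\F_q)$ behaves exactly like the classical affine group $\{x\mapsto ax+b\}$ acting on $\F_q$, transported to the places $\PP_q$.
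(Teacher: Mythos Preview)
Your proposal is correct and follows essentially the same approach as the paper: both solve the same two-equation linear system for $(a,b)$ arising from $\Gs_{a,b}(P_{\Ga_i})=P_{\Gb_i}$ to handle (i) and (ii) simultaneously, and both dispatch (iii) and (iv) by the same direct substitutions into $\Gs_{a,b}(P_\Ga)=P_{a^{-1}(\Ga-b)}$. The only cosmetic difference is that the paper writes the system as $\Gb_i x + y = \Ga_i$ while you write it as $a^{-1}(\Ga_i-b)=\Gb_i$, but these are the same equations.
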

\begin{proof} (i) Let $\Ga_1,\Ga_2,\Gb_1,\Gb_2\in\F_q$ with $\Ga_1\neq\Ga_2$ and $\Gb_1\neq\Gb_2$. Then the equation system
\begin{equation}\label{eq:6}\left\{\begin{array}{l}
\Gb_1x+y=\Ga_1\\
\Gb_2x+y=\Ga_2
\end{array}\right.\end{equation}
has a unique solution $(x,y)=(a,b)$. Apparently, $a\neq0$, otherwise we would have $\Gb_1=\Gb_2=b$.  Thus, the automorphism $\Gs_{a,b}$ maps $P_{\Ga_i}$ to $P_{\Gb_i}$ for $i=1,2$.

(ii) Let $\Gs=\Gs_{a,b}$ for some $a,b\in\F_q$. Assume that the images of $P_{\Ga_i}$ are $P_{\Gb_i}$ for $i=1,2$, respectively. Then $(a,b)$ is the unique solution of the equation system \eqref{eq:6}, i.e., $\Gs$ is completely determined by $\Gs(P_{\Ga_1})$ and $\Gs(P_{\Ga_2})$.

(iii) Let $f(x)=\sum_{i=0}^mf_ix^i$ with $f_i\in\F_q$. Then \begin{eqnarray*}\Gs(f)(\Gs(P_\Ga))&=&\left(\sum_{i=0}^mf_i(ax+b)^i\right)(P_{a^{-1}(\Ga-b)})\\
&=&\left(\sum_{i=0}^mf_i(a\times a^{-1}(\Ga-b)+b)^i\right)=f(\Ga)=f(P_\Ga).\end{eqnarray*}

(iv) Let $\Gs=\Gs_{a,b}$ be an automorphism that is not the identity, i.e., $(a,b)\neq(1,0)$. Let $P_\Ga$ be a fixed point of $\Gs$, i.e., $\Gs(P_\Ga)=P_\Ga$. Then $a^{-1}(\Ga-b)=\Ga$. This gives $(1-a)\Ga=b$. Hence, $a\neq 1$. Otherwise, we would have $b=0$. Therefore, $\Ga$ is the unique solution of $(1-a)x=b$. The proof is completed.
\end{proof}
\section{Bounds}
In this section, we derive various bounds. First of all, we derive an upper bound on the insdel code size given its length and minimum insdel distance. Secondly, we show an upper bound on the insdel minimum distance of Reed-Solomon codes.  Finally, we present a lower bound on the field size for a given Hamming-metric Singleton-optimal codes in terms of its length, dimension and minimum insdel distance.

\subsection{Upper bound on code size}
Given integers $q,n,d\ge 2$, one wonders what is the maximum  size of $q$-ary codes with length $n$ and insdel distance at least $d$. This motivates the following definition.

\begin{defn}
Given  integers $q,n,d\ge 2$, denote by $I_q(n,d)$ the largest size $M$ such that there exists an $(n,M)$-code with insdel distance at least $d$.
\end{defn}
Before deriving upper bounds on  $I_q(n,d)$, let us study relation between the Hamming and insdel metrics.

\begin{lemma}\label{lem:3.1} Let $q\ge 2$. Then for any $\bu,\bv\in[q]^n$, one has $d_I(\bu,\bv)\le 2d_H(\bu,\bv)$. In particular, we have $d_I(\mC)\le 2d_H(\mC)$ for any code $\mC\subseteq[q]^n$.
\end{lemma}
\begin{proof} %Let $\ell_\LCS(\bu,\bv)$ denote the length of a longest common subsequence between $\bu$ and $\bv$. Then
It is clear that we have  $\ell_\LCS(\bu,\bv)\ge n-d_H(\bu,\bv)$ . Thus, it follows that
\[d_I(\bu,\bv)=2n-2\ell_\LCS(\bu,\bv)\le 2n-2(n-d_H(\bu,\bv))=2d_H(\bu,\bv).\]
Now choose $\bu_0,\bv_0\in\mC$ such that $d_H(\bu_0,\bv_0)=d_H(\mC)$. Then we have $d_I(\mC)\le d_I(\bu_0,\bv_0)\le 2d_H(\bu_0,\bv_0)=2d_H(\mC)$. The proof is completed.
\end{proof}

Thus, we have the following  Singleton bound for the insdel-metric.
\begin{cor}\label{cor:3.2}
For a $q$-ary $(n,M)$-code $\mC$, one has
\begin{equation}\label{eq:7}|\mC|\le q^{n-\frac{d_I(\mC)}2+1}.\end{equation}
\end{cor}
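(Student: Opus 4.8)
The plan is to deduce the insdel Singleton bound directly from the Hamming Singleton bound by way of Lemma~\ref{lem:3.1}. The key observation is that the two statements we need are already in hand: the Hamming Singleton bound \eqref{eq:1} says a $q$-ary $(n,M)$-code $\mC$ with Hamming distance $d_H(\mC)$ satisfies $M \le q^{n-d_H(\mC)+1}$, and Lemma~\ref{lem:3.1} gives $d_I(\mC) \le 2d_H(\mC)$, equivalently $d_H(\mC) \ge \tfrac{d_I(\mC)}{2}$.

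First I would let $\mC$ be an arbitrary $q$-ary $(n,M)$-code and write $d_H = d_H(\mC)$ and $d_I = d_I(\mC)$. Applying \eqref{eq:1} gives $M \le q^{n-d_H+1}$. Since the exponent $n-d_H+1$ is a decreasing function of $d_H$, the inequality $d_H \ge d_I/2$ from Lemma~\ref{lem:3.1} yields $n-d_H+1 \le n - \tfrac{d_I}{2} + 1$, and therefore
\[
M \le q^{n-d_H+1} \le q^{\,n-\frac{d_I}{2}+1},
\]
which is exactly \eqref{eq:7}. One small point worth a sentence is that a code with $|\mC| = M \ge 2$ has a well-defined minimum Hamming distance, so $d_H(\mC)$ makes sense; the case $M \le 1$ is trivial since then the bound reads $1 \le q^{\text{something}}$ with a nonnegative exponent (indeed $d_I(\mC)$ is defined only when there are at least two codewords, and for $M=1$ the statement holds vacuously or by convention).

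There is essentially no obstacle here: the result is a one-line corollary, and the only thing to be careful about is the monotonicity step (that raising $q \ge 2$ to a smaller exponent gives a smaller value) and the degenerate small-$M$ case. I would present it in three short lines, invoking \eqref{eq:1} and Lemma~\ref{lem:3.1} explicitly so the logical dependence is transparent.
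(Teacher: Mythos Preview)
Your proposal is correct and follows exactly the paper's approach: the paper's proof consists of the single sentence ``This follows from the Singleton bound for the Hamming metric and Lemma~\ref{lem:3.1},'' which is precisely the chain $|\mC|\le q^{n-d_H(\mC)+1}\le q^{n-d_I(\mC)/2+1}$ you spell out.
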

\begin{proof}
This follows from the Singleton bound for the Hamming metric and Lemma \ref{lem:3.1}.
\end{proof}
We call the bound \eqref{eq:7}  the insdel-metric  Singleton bound.  A code achieving the bound \eqref{eq:7} is called insdel-metric Singleton-optimal.
For the Hamming metric, the Singleton bound can be achieved as long as the length is at most $q+1$ for any prime power $q$. In contrast, it can be shown that insdel Singleton bound is not achievable unless $d_I=2$ or $2n$ (see Theorem 3.6 below).

\begin{lemma}\label{lem:3.3} Let $\mC\subseteq[q]^n$ be a code with size $|\mC|=q^{n-d_H(\mC)+1}$. Then $\mC_R=[q]^{n-d_H(\mC)+1}$ for any subset $R$ of $[n]$ with $|R|=n-d_H(\mC)+1$, where $\mC_R$ stands for the projection of $\mC$ at $R$.
\end{lemma}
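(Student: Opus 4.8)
The plan is to exploit the fact that a Hamming-metric Singleton-optimal code of length $n$ and dimension $k = \log_q M$ (so $d_H(\mC) = n-k+1$, hence $n - d_H(\mC) + 1 = k$) is necessarily an \emph{MDS-like} configuration: it has $q^k$ codewords, and on any coordinate set $R$ of size $k$ the projection map $\mC \to [q]^R$ must be injective. Indeed, if two distinct codewords $\bu,\bv \in \mC$ agreed on a set $R$ of size $k$, then $\bu$ and $\bv$ would differ in at most $n - k$ positions, giving $d_H(\bu,\bv) \le n-k < n-k+1 = d_H(\mC)$, a contradiction. So $\proj_R \colon \mC \to [q]^R$ is injective.

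Next I would compare cardinalities. The domain $\mC$ has size $q^{n-d_H(\mC)+1} = q^{k}$, and the codomain $[q]^R$ has size $q^{|R|} = q^{k}$ as well. An injective map between two finite sets of the same cardinality is a bijection, so $\proj_R(\mC) = [q]^R$, i.e.\ $\mC_R = [q]^{n-d_H(\mC)+1}$ after identifying $[q]^R$ with $[q]^{|R|}$. That is exactly the claimed statement.

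The only slightly delicate point — and the step I would be most careful about — is the injectivity claim, because it is really a restatement of the dual characterization of MDS codes: every $k$ coordinates form an information set. One should phrase it purely in terms of Hamming distance (as above) rather than invoking linearity, since the lemma is stated for arbitrary, possibly nonlinear, codes $\mC \subseteq [q]^n$; the distance argument works verbatim in the nonlinear setting. It is worth double-checking the edge arithmetic: $d_H(\mC) \le n$ forces $n - d_H(\mC) + 1 \ge 1$, so $R$ is nonempty and $|R| = n-d_H(\mC)+1 \le n$ is a legitimate coordinate subset; and $|\mC| = q^{n-d_H(\mC)+1}$ is the Singleton bound $q^{n-d_H(\mC)+1}$, so the hypothesis is precisely ``$\mC$ is Hamming-metric Singleton-optimal.''

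In summary, the proof is a two-line counting argument: (1) distinct codewords cannot agree on $\ge n - d_H(\mC) + 1$ coordinates, so the projection to any such coordinate set is injective; (2) both $\mC$ and $[q]^{n-d_H(\mC)+1}$ have size $q^{n-d_H(\mC)+1}$, so injectivity upgrades to surjectivity, giving $\mC_R = [q]^{n-d_H(\mC)+1}$. No serious obstacle is expected; the main thing is to make sure the argument is stated metrically so it covers nonlinear codes.
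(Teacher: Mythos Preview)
Your proof is correct and follows exactly the same approach as the paper: first argue that the projection $\mC\to[q]^R$ is injective because two distinct codewords cannot agree on $n-d_H(\mC)+1$ positions (equivalently, deleting only $d_H(\mC)-1$ coordinates cannot collapse them), and then compare cardinalities to conclude surjectivity. The paper's proof is slightly terser but identical in substance.
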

\begin{proof} Deleting positions of every codeword at $[n]\setminus R$ gives $\mC_R$. Apparently, $\bu_R\neq\bv_R$ whenever $\bu\neq\bv$. This is because we delete only $d_H(\mC)-1$ positions and $\mC$ has distance $d_H(\mC)$. Hence, $|\mC_R|=|\mC|=q^{n-d_H(\mC)+1}$. As $\mC_R\subseteq [q]^{n-d_H(\mC)+1}$, we get $\mC_R=[q]^{n-d_H(\mC)+1}$.
\end{proof}

\begin{lemma}\label{lem:3.4}
If $\mC\subseteq[q]^n$ is an $(n, M)$-code with $n\ge 3$ and the insdel distance $d_I(\mC)=2n-2$,  then $M\le \frac{q^2+q}{2}$, i.e., $I_q(n,2n-2)\le \frac{q^2+q}{2}$ for $n\ge 3$.
\end{lemma}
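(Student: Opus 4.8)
The plan is to recast the hypothesis as a packing condition on subsets of $[q]\times[q]$ attached to the codewords and then finish by double counting. By Lemma~\ref{lem:2.1}, the assumption $d_I(\mC)=2n-2$ is exactly the statement that $\ell_\LCS(\bu,\bv)\le 1$ for every pair of distinct codewords $\bu,\bv\in\mC$. (By Lemma~\ref{lem:3.1} this also forces $d_H(\mC)\ge n-1$, so the Hamming Singleton bound already gives $M\le q$ when $d_H(\mC)=n$ and $M\le q^2$ when $d_H(\mC)=n-1$; but extracting the extra factor $\tfrac12$ requires a finer invariant, so I will work directly with longest common subsequences.) To each word $\bu=(u_1,\dots,u_n)\in[q]^n$ attach
\[
P(\bu)=\bigl\{(a,b)\in[q]\times[q]\;:\; u_i=a\text{ and }u_j=b\text{ for some }1\le i<j\le n\bigr\}.
\]
The first step is the equivalence $\ell_\LCS(\bu,\bv)\ge 2 \iff P(\bu)\cap P(\bv)\neq\emptyset$: one direction is immediate since a length-$2$ common subsequence $(a,b)$ witnesses $(a,b)\in P(\bu)\cap P(\bv)$, and conversely a common element $(a,b)$ of $P(\bu)$ and $P(\bv)$ supplies occurrences $u_i=a,u_j=b$ with $i<j$ and $v_{i'}=a,v_{j'}=b$ with $i'<j'$, hence a common subsequence of length $2$ (the case $a=b$, which merely forces $a$ to occur at least twice, being the same). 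Thus under our hypothesis the sets $P(\bu)$, $\bu\in\mC$, are pairwise disjoint subsets of $[q]\times[q]$.

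Next I would control the sizes of these sets. Since $n\ge 2$, the pair $(u_1,u_n)$ always lies in $P(\bu)$, so each $P(\bu)$ is nonempty and, as they are pairwise disjoint, $M\le q^2$. The gain comes from the dichotomy, valid for $n\ge 3$: $|P(\bu)|=1$ if and only if $\bu$ is a constant word $a^n$ (and then $P(\bu)=\{(a,a)\}$). For the substantive direction, suppose $\bu$ is not constant. If $u_1=u_n=a$, pick a position $p$ with $u_p=c\neq a$ (necessarily $1<p<n$); then $(a,a),(a,c),(c,a)$ all lie in $P(\bu)$. If instead $u_1=a\neq b=u_n$, then $(a,b)\in P(\bu)$, and if $|P(\bu)|$ were $1$ we would need $a$ and $b$ each to occur exactly once (at positions $1$ and $n$), forcing $u_2\notin\{a,b\}$ since $n\ge 3$, so $(a,u_2)\in P(\bu)\setminus\{(a,b)\}$, a contradiction. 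Hence $|P(\bu)|\ge 2$ for every non-constant $\bu$, and since there are exactly $q$ constant words, at most $q$ codewords $\bu$ satisfy $|P(\bu)|=1$.

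Finally, let $M_1$ denote the number of $\bu\in\mC$ with $|P(\bu)|=1$; then $M_1\le q$, and pairwise disjointness yields
\[
M_1+2(M-M_1)\le\sum_{\bu\in\mC}|P(\bu)|\le|[q]\times[q]|=q^2,
\]
so $M\le\tfrac12(q^2+M_1)\le\tfrac12(q^2+q)$, which is the claimed bound. The step I expect to be the main obstacle is the size dichotomy of the second paragraph: the obvious invariant $(u_1,u_n)$ alone gives nothing past $M\le q^2$, and it cannot be pushed further at that level — for instance $(a,c,b)$ and $(b,c,a)$ with $a,b,c$ distinct have $\ell_\LCS=1$ yet realize the two coordinate pairs $(a,b)$ and $(b,a)$ — so one genuinely needs the richer invariant $P(\bu)$ together with a precise description of when it is a singleton.
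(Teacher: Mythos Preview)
Your proof is correct and takes a genuinely different route from the paper's. The paper argues by contradiction: assuming $M\ge\frac{q^2+q}{2}+\frac12$, it first rules out $d_H(\mC)=n$ (which would force $M\le q$), so $d_H(\mC)=n-1$, and then projects $\mC$ onto the two coordinate windows $R_1=\{1,2\}$ and $R_2=\{2,3\}$; these projections are injective, so an inclusion--exclusion count gives $|\mC_{R_1}\cap\mC_{R_2}|\ge 2M-q^2\ge q+1$, whence some non-diagonal pair $(\alpha,\beta)$ lies in both, producing two distinct codewords with $\ell_\LCS\ge2$.

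Your argument replaces this by a global packing: the sets $P(\bu)\subseteq[q]^2$ of all length-$2$ subsequences are pairwise disjoint under the hypothesis, and your dichotomy $|P(\bu)|=1\iff\bu$ constant then gives the bound by a one-line sum. This avoids the case split on $d_H$ and the choice of specific coordinate windows, and is arguably the cleaner explanation of where the extra factor $\tfrac12$ comes from (each non-constant word occupies at least two cells of $[q]^2$). The paper's approach, on the other hand, is slightly lighter on verification since it only needs to exhibit one bad pair rather than analyse all of $P(\bu)$. Both arguments are short; yours is the more structural one and would adapt more readily if one wanted analogous bounds for $d_I(\mC)=2n-2t$ via length-$(t{+}1)$ subsequence sets.
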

\begin{proof}
Suppose $M\ge \frac{q^2+q+1}{2}.$ Put $d=d_H(\mC)$. By Lemma \ref{lem:3.1}, we have $d=d_H(\mC)\ge \frac{d_I(\mC)}2=n-1$. If $d=n$, then $M\le q$ by the Singleton bound for the Hamming metric.  This implies that $\frac{q^2+q+1}{2}\leq M\leq q$, which is a contradiction due to the fact that for any integer $q\ge2,$ we have $q<\frac{q^2+q+1}{2}$. This contradiction forces that $d=n-1$, i.e., $2d=d_I$. Let $R_1=\{1,2\}$ and $R_2=\{2,3\}$. Then we have $|\mC_{R_1}|=|\mC_{R_2}|=|\mC|=M$. Hence,
\begin{eqnarray*}
|\mC_{R_1}\cap\mC_{R_2}|&=&|\mC_{R_1}|+|\mC_{R_2}|-|\mC_{R_1}\cup\mC_{R_2}|\\
&\ge& 2\times M-q^2\ge 2\times  \frac{q^2+q+1}{2}-q^2=q+1.
\end{eqnarray*}
It implies that there exists $(\Ga, \Gb)\in \mC_{R_1}\cap\mC_{R_2}$ with $\Ga\neq\Gb.$ Let ${\bu, \bv}$ be two codewords of $\mC$ such that ${\bu}_{R_1}={\bv}_{R_2}=(\Ga, \Gb)$, then ${\bu}=(\Ga,\Gb,*,\cdots,*)$ and ${\bv}=(*,\Ga,\Gb,\cdots,*)$. It is clear that ${\bu}\neq{\bv}$ as their second coordinates differ.  So, the length of their longest common subsequence satisfies $\ell_{\rm{LCS}}({\bu, \bv})\ge 2$, this gives $d_I(\mC)\le2(d+1)-2\times 2=2d-2.$ It is a contradiction.
\end{proof}

Lemma \ref{lem:3.4} provides a general upper bound on $I_q(n,2n-2)$. If $n\ge q+1$, then the bound can be significantly reduced. In order to obtain this improved bound, let us introduce the following map.

\begin{equation}\label{eq:8}
\phi:\; [q]^n\rightarrow J_q(n);\quad \bu=(u_1,u_2,\dots,u_n)\mapsto(a_1,a_2,\dots,a_q),
\end{equation}
where $a_i$ stand for the sizes of $\{j\in[n]:u_j=i\}$ for $i=1,2,\dots,q$.  It is clear that $\phi$ is surjective. However it is not injective. For instance, under $\phi$ for $q=3$ and $n=5$, both the ternary vector $(1,1,2,3,1)$ and $(1,3,1,1,2)$ are mapped to $(3,1,1)$.

\begin{lemma}\label{lem:3.5}
If $\mC\subseteq[q]^n$ is an $(n, M)$-code with $n\ge q+1$ and the insdel distance $d_I(\mC)=2n-2$,  then $M\le q$, i.e., $I_q(n,2n-2)\le q$ for $n\ge q+1$.
\end{lemma}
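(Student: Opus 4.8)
The plan is to exploit the map $\phi$ from \eqref{eq:8} together with the structural constraint that $d_I(\mC)=2n-2$ is the maximum possible insdel distance (equivalently, by Lemma \ref{lem:2.1}, that every pair of distinct codewords has longest common subsequence of length exactly $1$). First I would observe that if $\bu,\bv\in\mC$ are distinct codewords with $\phi(\bu)=\phi(\bv)$, then $\bu$ and $\bv$ are permutations of one another; since $n\ge q+1$, some symbol $i\in[q]$ occurs at least twice in $\bu$ (and hence in $\bv$). I would then argue that two vectors with the same multiset of entries, both containing a repeated symbol, must have a common subsequence of length at least $2$ — intuitively, pick the repeated symbol $i$; it appears $\ge 2$ times in each, and one can always match two of these occurrences in increasing-index order across the two words (the simplest case: the length-$2$ word $(i,i)$ is a common subsequence of both). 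This would give $\ell_\LCS(\bu,\bv)\ge 2$, hence $d_I(\bu,\bv)\le 2n-4<2n-2$, contradicting $d_I(\mC)=2n-2$. Therefore $\phi$ is injective on $\mC$.

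Next I would show that $\phi(\mC)$ has very restricted shape. The key point is that distinct codewords $\bu\ne\bv$ of $\mC$ must have $\ell_\LCS(\bu,\bv)=1$, which is an extremely strong condition: essentially the only way two length-$n$ words over $[q]$ can fail to share a common subsequence of length $2$ is for each to be ``monotone-like'' in a compatible way. Concretely, I would try to prove that $d_I(\mC)=2n-2$ forces every codeword of $\mC$ to be a constant vector $(i,i,\dots,i)$ for some $i\in[q]$: if a codeword $\bu$ contains two distinct symbols, say positions $p<p'$ with $u_p\ne u_{p'}$, then I would compare $\bu$ against another codeword and locate a length-$2$ common subsequence, again contradicting maximality. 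Since there are exactly $q$ constant vectors in $[q]^n$, this yields $M=|\mC|\le q$.

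The main obstacle I anticipate is making precise the combinatorial claim that $\ell_\LCS(\bu,\bv)=1$ for all distinct pairs is only possible when all codewords are constant (once $n\ge q+1$, so that the pigeonhole forces repetitions). One must handle it carefully: given a non-constant codeword $\bu$ and any second codeword $\bv$, one needs to exhibit two indices realizing a common subsequence of length $2$, and the argument splits according to whether $\bv$ repeats a symbol, whether $\bu$ and $\bv$ share a symbol, etc. The cleanest route is probably: (a) first dispatch injectivity of $\phi$ as above; (b) then note that since $n\ge q+1$ every codeword's image under $\phi$ has some coordinate $\ge 2$, i.e. every codeword repeats some symbol; (c) finally show that if $\bu$ repeats symbol $i$ and $\bv$ repeats symbol $j$, then $\bu,\bv$ have a common subsequence of length $2$ unless $\bu$ and $\bv$ are both constant (and then they are equal or distinct constants). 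Combining (a)--(c) gives that $\mC$ consists only of constant vectors, so $M\le q$. I would expect the write-up of step (c) to require a short case analysis but no heavy computation.
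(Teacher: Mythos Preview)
Your step (c) is false as stated, and the stronger goal of showing that every codeword must be constant cannot succeed. Take $q=2$, $n=3$ (so $n\ge q+1$), and $\mC=\{(1,1,2),(2,2,1)\}$. Here $\bu=(1,1,2)$ repeats the symbol $1$ and $\bv=(2,2,1)$ repeats the symbol $2$; neither is constant, yet $\ell_{\LCS}(\bu,\bv)=1$ (neither $(1,2)$ nor $(2,1)$ is a subsequence of both, and neither word contains the other's repeated symbol twice). Thus $d_I(\mC)=2n-2$ while $\mC$ contains non-constant words, so the ``short case analysis'' you anticipate for (c) cannot be completed.

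The repair is already implicit in your step (a). There you observe that if two distinct codewords both contain the \emph{same} symbol $i$ at least twice, then $(i,i)$ is a common subsequence of length $2$, contradicting $d_I(\mC)=2n-2$. You invoke this only under the additional hypothesis $\phi(\bu)=\phi(\bv)$, but it holds unconditionally: for each symbol $i\in[q]$, at most one codeword of $\mC$ has $i$ occurring at least twice. Combine this with your step (b) (each codeword repeats at least one symbol, by pigeonhole from $n\ge q+1$) and double-count the pairs $(\bc,i)$ with $\bc\in\mC$ and $i$ repeated in $\bc$: there are at least $M$ such pairs (each codeword contributes one) and at most $q$ (each symbol contributes at most one), whence $M\le q$. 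This is precisely the paper's argument; neither the injectivity of $\phi$ nor step (c) is needed.
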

\begin{proof} Let $\mC=\{\bu_1,\bu_2,\dots,\bu_M\}$ and let $\phi(\bu_i)=(a_{i1},a_{i2},\dots,a_{iq})$. Then we form an $M\times q$ matrix
\[A:=\begin{pmatrix}
a_{11}&a_{12}&\cdots&a_{1q}\\
a_{21}&a_{22}&\cdots&a_{2q}\\
\cdots&\cdots&\cdots&\cdots\\
a_{M1}&a_{M2}&\cdots&a_{Mq}\\
\end{pmatrix}.\]
We claim that every column of $A$ has at most one entry bigger than $1$. Otherwise, we would have $i\neq j\in[M]$ and $k\in[q]$ such that $a_{ik}\ge 2$ and $a_{jk}\ge 2$. This implies that  $k$ appears at least two times in both the vectors $\bu_i$ and $\bu_j$. Therefore, $\ell_\LCS(\bu_i,\bu_j)\ge 2$. This gives $d_I(\mC)\le d_I(\bu_i,\bu_j)=2n-2\ell_\LCS(\bu_i,\bu_j)\le 2n-4$. This is a contradiction to our condition $d_I(\mC)=2n-2$. By our claim, the total number of entries bigger than $1$ is at most $q$.

On the other hand, as $n\ge q+1$, every row has at least one entry bigger than $1$. Hence, the  total number of entries bigger than $1$ is at least $M$. This gives $M\le q$.
\end{proof}
Based on the above two lemmas, we can now derives some upper bounds on $I(n,d)$.
\begin{theorem}\label{thm:3.6} We have the following results.
\begin{itemize}
\item[{\rm (i)}] $I_q(n,2)=q^n$ and $I_q(n,2n)=q$.
\item[{\rm (ii)}]
$I_q(n,d)\le \frac{1}{2}\left(q^{n-\frac{d}{2}+1}+q^{n-\frac{d}{2}}\right)$ for $4\le d\le 2n-2$.
\item[{\rm (iii)}] $I_q(n,d)\le q^{n-\frac{d}{2}}$ for $2q\le d\le 2n-2$.
\end{itemize}
% where $d_I(\mC)$ is the insdel distance of $\mC.$
\end{theorem}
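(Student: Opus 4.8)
The plan is to treat the three items separately, exploiting the lemmas already established. For item (i), the equality $I_q(n,2)=q^n$ is immediate since every pair of distinct vectors in $[q]^n$ has insdel distance at least $2$ (a deletion followed by an insertion), and $[q]^n$ itself has insdel distance exactly $2$. For $I_q(n,2n)=q$, note that $d_I(\bu,\bv)=2n$ forces $\ell_{\LCS}(\bu,\bv)=0$ by Lemma \ref{lem:2.1}, i.e.\ $\bu$ and $\bv$ share no common symbol at all; hence all codewords use pairwise disjoint symbol sets, but each codeword uses at least one symbol, so $M\le q$, and the constant vectors $(i,i,\dots,i)$ for $i\in[q]$ show this is attained.

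For item (ii), the idea is to combine Lemma \ref{lem:3.1} (which gives $d_H(\mC)\ge \lceil d/2\rceil = d/2$, since $d$ is even on this range) with a dichotomy on whether $\mC$ meets the Hamming-metric Singleton bound. Set $k := n-\frac d2+1$, so the target inequality reads $M\le \frac12(q^{k}+q^{k-1})$. If $d_H(\mC)\ge \frac d2+1$, the Hamming Singleton bound gives $M\le q^{n-d_H(\mC)+1}\le q^{k-1}$, which is stronger than needed. So we may assume $d_H(\mC)=\frac d2$, i.e.\ $2d_H(\mC)=d_I(\mC)$. Here I would adapt the projection argument of Lemma \ref{lem:3.4}: if $M > \frac12(q^k+q^{k-1})$, pick two overlapping coordinate windows $R_1,R_2$ of size $k$ that share all but one coordinate (say $R_1=\{1,\dots,k\}$, $R_2=\{2,\dots,k+1\}$); since $\mC$ has Hamming distance $d_H(\mC)=n-k+1$, projecting to any $k$ coordinates is injective, so $|\mC_{R_1}|=|\mC_{R_2}|=M$, and inclusion–exclusion inside $[q]^k$ gives $|\mC_{R_1}\cap\mC_{R_2}|\ge 2M-q^k > q^{k-1}$. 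One then argues there exist two codewords $\bu\neq\bv$ whose projections onto $R_1$ and $R_2$ respectively agree on the length-$k$ string; because this string has length $k$ and the two windows overlap in $k-1$ positions, one extracts a common subsequence of $\bu$ and $\bv$ of length $\ge k$ (the shifted match), forcing $\ell_{\LCS}(\bu,\bv)\ge k$ and hence $d_I(\mC)\le 2n-2k = d-2$, a contradiction. The subtlety here, which I expect to be the main obstacle, is making the pigeonhole step precise: from $|\mC_{R_1}\cap\mC_{R_2}|>q^{k-1}$ one needs that some length-$k$ string lies in both projections with the property that, after discarding its last coordinate versus its first coordinate, the surviving length-$(k-1)$ pieces can be threaded through a single pair $(\bu,\bv)$ — essentially a counting argument on the fibers of the coordinate-dropping maps $\mC_{R_i}\to [q]^{k-1}$, mirroring how Lemma \ref{lem:3.4} handled the case $k=2$.

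For item (iii), the range is $2q\le d\le 2n-2$, i.e.\ $k=n-\frac d2+1\le n-q+1$, equivalently $n-k+1\ge q$; writing $d_H:=d_H(\mC)\ge \frac d2 = n-k+1\ge q$. The plan is to use the surjection $\phi$ of \eqref{eq:8} together with the column argument of Lemma \ref{lem:3.5}. The key observation generalizing that lemma: if $\ell_{\LCS}(\bu,\bv)\ge k$, then $d_I(\bu,\bv)\le 2n-2k=d-2$, so any two codewords have $\ell_{\LCS}<k$. Now I would examine, for each codeword $\bu_i$, its composition vector $\phi(\bu_i)=(a_{i1},\dots,a_{iq})$; since $\sum_j a_{ij}=n$ and there are $q$ symbols, the "excess" $\sum_j (a_{ij}-1)^+$ is at least $n-q\ge k-1$ — in fact one shows that two codewords sharing a lot of multiplicity-excess in the same symbols would admit a long common subsequence. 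Making this quantitative, I would bound $M$ by a fractional/double-counting argument over the matrix $A=(a_{ij})$: the total multiplicity excess in each column is limited by the LCS constraint, while each row contributes excess $\ge k-1$, yielding $M\le q^{k-1}=q^{n-d/2}$. The cleanest route is probably induction on $k$ (equivalently on $n-d/2$), with Lemma \ref{lem:3.5} as the base case $k=1$: given a code with $d_I(\mC)=d$ and $n-k+1\ge q$, restrict attention to the most frequent symbol, "peel off" one of its occurrences from every codeword to produce a code of length $n-1$ with insdel distance still $\ge d-2$ (here the $n\ge q+1$-type hypothesis is what makes the peeling lossless, exactly as in Lemma \ref{lem:3.5}), and apply the inductive hypothesis. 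I expect the bookkeeping in this peeling/induction — ensuring the distance drops by at most $2$ and the alphabet-size condition is preserved — to be the delicate part of item (iii).
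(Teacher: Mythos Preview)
Your part~(i) is correct and matches the paper.

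For part~(ii) your direct projection argument is correct and actually cleaner than the paper's. The pigeonhole step you flag is not subtle: since $k=n-\frac d2+1\ge2$ on the range $4\le d\le 2n-2$, the bound $|\mC_{R_1}\cap\mC_{R_2}|\ge 2M-q^k>q^{k-1}\ge q$ already guarantees a \emph{non-constant} string $s\in\mC_{R_1}\cap\mC_{R_2}$; the unique codewords $\bu,\bv$ with $\bu_{R_1}=s$ and $\bv_{R_2}=s$ are then distinct (if $\bu=\bv$ then $\bu_{R_1}=\bu_{R_2}$ forces $u_1=\cdots=u_{k+1}$, so $s$ would be constant), and $s$ itself is a length-$k$ common subsequence of $\bu,\bv$, giving the contradiction $d_I(\mC)\le 2n-2k=d-2$. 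No fibre analysis is needed. The paper instead proves (ii) by an iterative coordinate-peeling: partition $\mC$ by the value of the first coordinate, pass to the largest class (size $\ge M/q$), and delete that coordinate; this drops the length by one without decreasing $d_H$ or $d_I$ (prepending a common symbol to two words raises their LCS by at least one). After $n-d_H-1$ iterations one has a code of length $d_H+1$ still satisfying $d_I=2d_H$, and Lemma~\ref{lem:3.4} finishes.

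For part~(iii), however, there is a genuine gap, and the price of having bypassed the paper's reduction becomes apparent. Your double-counting on the composition matrix $A$ only yields that in each column at most one entry can be $\ge k$ (via monochromatic common subsequences); this controls at most $q$ codewords and says nothing about codewords whose entries are all $\le k-1$, which are perfectly consistent with $n\le q(k-1)$ and cover most of the range $2q\le d\le 2n-2$. Your symbol-peeling induction also cannot reach a base case: removing one symbol occurrence from each codeword yields a code of length $n-1$ and insdel distance $\ge d-2$, so the quantity $n-\frac d2$ does not increase along the induction, and you never arrive at the Lemma~\ref{lem:3.5} regime $d=2\cdot(\text{length})-2$ unless you started there. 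The paper's proof of (iii) simply reuses its coordinate-peeling from (ii): because that peeling preserves $d_I$, one obtains a code $\mC_{n-d_H-1}$ of length $d_H+1$ with $d_I=2d_H$ and size at least $M/q^{\,n-d_H-1}$; the hypothesis $d_H=\frac d2\ge q$ now gives length $d_H+1\ge q+1$, so Lemma~\ref{lem:3.5} applies and yields $|\mC_{n-d_H-1}|\le q$, hence $M\le q^{n-d/2}$.
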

\begin{proof}

Let $\mC\subseteq[q]^n$ be an $(n,M)$-code with $M=I_q(n,d)$ and insdel distance $d=d_I(\mC)$.
Put $d_H=d_H(\mC)$.

(i) By the insdel-metric Singleton bound of Corollary \ref{cor:3.2}, we have $I_q(n,2)\le q^n$ and $I_q(n,2n)\le q$. The code $[q]^n$ has insdel distance $2$, hence $I_q(n,2)\ge q^n$. The code $\{(a,a,\dots,a):\; a\in[q]\}$ has insdel distance $2n$, hence $I_q(n,2)\ge q$.

(ii) Assume that $4\le d\le 2n-2$.
Suppose
\begin{equation}\label{eq:9}
M\ge \frac{1}{2}\left(q^{n-\frac{d}{2}+1}+q^{n-\frac{d}{2}}+1\right).
\end{equation}
First of all, we claim that $d=2d_H$. Otherwise, we would have $d\le 2(d_H-1)$. By \eqref{eq:9}, we have $M\ge\frac{1}{2}\left(q^{n-d_H+2}+q^{n-d_H+1}+1\right)>q^{n-d_H+1}$. This is a contradiction to the Singleton bound for the Hamming metric.

Thus, we have $2\le d_H=\frac{d}2\le n-1$.
Define the set $\mA_i=\{(c_1,\cdots, c_n)\in\mC: c_1=i\}$, then $\mA_1,\dots,\mA_q$ are pairwise disjoint and $\bigcup_{i=1}^q\mA_i=\mC$. So, $M=|\mC|=\sum_{i=1}^q|\mA_i|\le q\max_{1\le i\le q}{|\mA_i|}$. Let $|\mA_\ell|=\max_{1\le i\le q}|\mA_i|,$ we have $|\mA_\ell|\ge \frac{M}{q}$.

 Let $\mC_1$ be the code obtained from $\mA_\ell$ by removing the first position. Then $\mC_1$ is  an $(n-1, \ge \frac{M}{q})$-code.
 It is clear that $d_H(\mC_1)\ge d_H$.

  Now, we show that $d_I(\mC_1)\ge d.$ Let $\bu$ and $\bv$ be two distinct codewords of $\mC_1\subseteq[q]^{n-1}$ satisfying
\begin{equation*}
d_I(\mC_1)=2(n-1)-2\ell_{\rm{LCS}}(\bu,\bv).
\end{equation*}
%where $\ell_{\rm{LCS}}(\bu,\bv)$ denotes the longest common subsequence between $\bu$ and $\bv$. Let $\mC$ be the code obtained from $\mC_1$ by adding the first position $\Ga_\ell$, so
As $(\Ga_\ell,\bu)$ and $(\Ga_\ell,\bv)$ are two distinct codewords of $\mC$, we have
\begin{eqnarray*}
d&\le& 2n-2\ell_{\rm{LCS}}((\Ga_\ell,\bu),(\Ga_\ell,\bv))\\
&=&2n-2(1+\ell_{\rm{LCS}}(\bu,\bv))=d_I(\mC_1).
\end{eqnarray*}
Continuing in this fashion, we can obtain a sequence $\{\mC_i\}_{i=0}^{n-d_H-1}$ with $\mC_0=\mC$ such that $\mC_i$ is a $(n-i,\ge \frac{|\mC_{i-1}|}{q^i})$-code. Furthermore,  for any $i=0,\cdots, n-d_H-1,$ we have
\begin{eqnarray*}
&d_H(\mC_i)\ge d_H(\mC_{i-1})\cdots\ge d(\mC_0)=d_H,\\
&d_I(\mC_i)\ge d_I(\mC_{i-1})\cdots\ge d_I(\mC_0)=d,\\
{\rm and}~  &|\mC_i|\ge\frac{|\mC_{i-1}|}{q}\ge\frac{|\mC_{i-2}|}{q^2}\cdots\ge\frac{|\mC_{0}|}{q^i}=\frac{M}{q^i}.
\end{eqnarray*}
Now let us consider the code $\mC_{n-d_H-1}$. It is a $(d_H+1,\ge \frac{M}{q^{n-d_H-1}})$ code with $d_H(\mC_{n-d_H-1})\ge d_H$ and $d_I(\mC_{n-d_H-1})\ge d$. First we have
\begin{eqnarray*}
|\mC_{n-d_H-1}|&\ge&\frac{M}{q^{n-d_H-1}}\ge\frac1{q^{n-d_H-1}} \left[\frac{1}{2}\left(q^{n-\frac{d}{2}+1}+q^{n-\frac{d}{2}}+1\right)\right]\\
&=&\frac1{2}\left(q^{d_H-\frac{d}{2}+2}+q^{d_H-\frac{d}{2}+1}\right)+\frac{1}{2}\left(\frac1{q^{n-d_H-1}}\right)\\
&=&\frac12(q^2+q)+\frac{1}{2}\left(\frac1{q^{n-d_H-1}}\right),
\end{eqnarray*}
where the last equality holds since $\frac{d}{2}= d_H.$ Next, we show that $d_H(\mC_{n-d_H-1})=d_H$ and $d_I(\mC_{n-d_H-1})=d.$ Suppose $d_H(\mC_{n-d_H-1})=d_H+1$, then by the Hamming-metric Singleton bound, $|\mC_{n-d_H-1}|\le q^{d_H+1-(d_H+1)+1}=q$. This  contradicts the inequality $|\mC_{n-d_H-1}|\ge \frac12(q^2+q)+\frac{1}{2}(\frac1{q^{n-d_H-1}}).$ Thus, $d_H(\mC_{n-d_H-1})=d_H$.

Now  we have $d_I(\mC_{n-d_H-1})\ge d=2d_H=2d_H(\mC_{n-d_H-1})$. On the other hand, it is true that $d_I(\mC_{n-d_H-1})\le 2d_H(\mC_{n-d_H-1})$. Thus, we get $d_I(\mC_{n-d_H-1})=2d_H(\mC_{n-d_H-1})$. Applying Lemma~\ref{lem:3.4} gives $|\mC_{n-d_H-1}|\le\frac12({q^2+q})$. This is a contradiction and the proof is completed.

(iii) Assume that $2q\le d\le 2n-2$. Suppose that $M\ge q^{n-\frac d2}+1$. Again we claim that $d=2d_H$. Otherwise, we would have $d\le 2(d_H-1)$ and this gives $M\ge q^{n-d_H+1}+1$. This is a contradiction to the Singleton bound for the Hamming metric.

By the same arguments, we can obtain a sequence $\{\mC_i\}_{i=0}^{n-d_H-1}$ with $\mC_0=\mC$ such that $\mC_i$ is a $(n-i,\ge \frac{|\mC_{i-1}|}{q^i})$-code with $d_I(\mC_i)=d$ and $d_H(\mC_i)=d_H$ for all $0\le i\le n-d_H-1$.
Applying Lemma \ref{lem:3.5} to the code $\mC_{n-d_H-1}$ gives $\frac M{q^{n-d_H-1}}\le q$, i.e., $M\le q^{n-d_H}=q^{n-\frac d2}$.
\end{proof}

\iffalse

\begin{theorem}\label{thm:3.4} Let $\mC\subset\F_q^n$ with $d_H(\mC)\ge 2$. Then one has %$d_I(\mC)\le 2(d_H(\mC)-1)$. In particular, one has
\[|\mC|< q^{n-\frac{d_I(\mC)}2+1}.\]\
\end{theorem}
\begin{proof} If $|\mC|< q^{n-{d_H(\mC)}+1}$, then by Lemma \ref{lem:3.1} we get  $|\mC|< q^{n-\frac{d_I(\mC)}2+1}$.

Now we assume that $|\mC|\ge q^{n-{d_H(\mC)}+1}$. This forces that $|\mC|= q^{n-{d_H(\mC)}+1}$ by the Singleton bound for Hamming metric. Put $d=d_H(\mC)$ and let $R_1=\{1,2,\dots,n-d-1,n-d,n-d+1\}$ and $R_2=\{1,2,\dots,n-d-1,n-d+1,n-d+2\}$. By Lemma \ref{lem:3.3}, we have $\mC_{R_1}=\mC_{R_2}=\F_q^{n-d_H(\mC)+1}$. This implies that there are two codewords of the forms $(0,0,\dots,0,0,1,*,\dots,*)$ and $(0,0,\dots,0,*,0,1,*,\dots,*)$, where the first codeword has $n-d$ consecutive 0's from the beginning, while the second codeword has $n-d-1$ consecutive 0's from the beginning. The length of an LCS of these two codewords is at least $n-d+1$. Hence, we have $d_I(\mC)\le 2n-2(n-d+1)=2(d-1)$. Thus, we have
\[|\mC|=q^{n-{d_H(\mC)}+1}\le q^{n-\frac{d_I(\mC)}2}.\]
The proof is completed.
\end{proof}

\textcolor{red}{Question: is it always true that $|\mC|\le q^{n-\frac{d_I(\mC)}2}$ for every code $\mC$ with $d_H(\mC)\ge 2$?}
\fi

\begin{rmk}{\rm
In general,  we do NOT have the upper bound $I_q(n,d)\le q^{n-\frac{d}2}$. Here is a counter-example. We label elements of $\ZZ_q$ by $\{\Ga_1,\Ga_2,\dots,\Ga_q\}$. Consider the following $q$-ary code of length $n$ with $n\le q$:
\[\mC=\{\Ga_1\cdot\b1,\Ga_2\cdot\b1,\dots,\Ga_q\cdot\b1,(\Ga_1,\Ga_2,\dots,\Ga_n)\},\] where $\b1$ stands for the all-one vector of length $n$.
Then $\mC$ is a $q$-ary $(n,q+1)$-code with insdel distance $d_I=2n-2$. Hence, $q^{n-\frac{d_I}2}=q$. However, we have  $|\mC|=q+1$.

The other nontrivial example can be found Remark \ref{rm:5}(i).
}\end{rmk}

Although it does not hold that $M\le q^{n-\frac{d_I}2}$ for a  $q$-ary $(n,M)$-nonlinear code with indel distance $d_I$, it is true if the code is linear (note that whenever we speak of linear codes, our alphabet $\F_q$ is a finite field with a prime power $q$).

\begin{cor}\label{cor:3.7} For an $[n,k]$-linear code $\mC$ with $4\le d_I(\mC)\le 2n-2n$, one has $d_I(\mC)\le 2n-2k$.
\end{cor}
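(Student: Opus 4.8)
The plan is to reduce the linear case to Theorem~\ref{thm:3.6}(iii) by exploiting the fact that a linear code has a very small alphabet relative to its length in the regime that matters, together with the fact that the Hamming distance of a linear $[n,k]$-code of size $q^k$ is tightly controlled by the Singleton bound. First I would set $d = d_I(\mC)$ and $d_H = d_H(\mC)$, and recall from Lemma~\ref{lem:3.1} that $d \le 2 d_H$. Since $\mC$ is $[n,k]$-linear, $|\mC| = q^k$, so the Hamming-metric Singleton bound forces $d_H \le n-k+1$; combined with $d \le 2 d_H$ this already gives $d \le 2n-2k+2$, which is just shy of the claimed $2n-2k$. The point of the corollary is to shave off the extra $2$, and for that I expect to need to rule out the boundary cases $d = 2n-2k+2$ and $d = 2n-2k+1$ (the latter being automatically excluded once one observes $d$ is even, since $d = 2n - 2\ell_\LCS(\bu,\bv)$ for the closest pair).

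So the real work is to show $d \ne 2n-2k+2$, i.e.\ that a linear $[n,k]$-code cannot simultaneously be Hamming-Singleton-optimal (an MDS code) and have $d_I = 2d_H = 2(n-k+1)$. Here I would invoke the structural Lemma~\ref{lem:3.3}: for an MDS $[n,k]$-code, every projection onto a coordinate set $R$ of size $n-d_H+1 = k$ is all of $[q]^k$. Choosing two overlapping windows $R_1, R_2$ of size $k$ that differ in one coordinate — exactly as in the proofs of Lemma~\ref{lem:3.4} and Theorem~\ref{thm:3.4} in the commented-out block — one produces two distinct codewords whose longest common subsequence is long enough to force $d_I \le 2(d_H - 1) \cdot \text{(something)}$, more precisely $\ell_\LCS \ge n-d_H+2$ on a suitable shift, yielding $d_I \le 2n - 2(n-d_H+2) = 2d_H - 4 < 2d_H$. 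This contradicts $d = 2d_H$, so $d \le 2d_H - 2 = 2(n-k+1) - 2 = 2n-2k$. (One must double-check the small-length edge cases where the windows $R_1,R_2$ degenerate, but the hypothesis $4 \le d_I(\mC)$ and $d_I \le 2n-2$ keep us away from those.)

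An alternative, possibly cleaner route: if $n \ge q+1$ is not assumed, one can still argue directly. Given that $\mC$ is linear of dimension $k$, the alphabet is a field $\F_q$; apply the window-shrinking argument of Theorem~\ref{thm:3.6}(ii)–(iii) to pass to a code $\mC'$ of length $d_H+1$ with $|\mC'| \ge q^{k - (n - d_H - 1)}$ and $d_I(\mC') = 2 d_H(\mC')$, then contradict Lemma~\ref{lem:3.4}. I would favour the Lemma~\ref{lem:3.3} approach since linearity makes the MDS projection statement exact and avoids carrying inequalities through the shrinking process.

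The main obstacle I anticipate is the bookkeeping at the boundary: making sure that when $d_H = n-k+1$ the two windows $R_1, R_2$ of size $k$ actually exist and overlap in $k-1$ coordinates (this needs $k \le n-1$, i.e.\ $k < n$, which is implicit since otherwise $d_I \le 2$ contradicts $d_I \ge 4$), and that the shifted-codeword pair one extracts from $\mC_{R_1} = \mC_{R_2} = [q]^k$ is genuinely distinct and has the claimed long common subsequence. Everything else — the chain $d \le 2d_H \le 2(n-k+1)$ and the parity observation — is routine.
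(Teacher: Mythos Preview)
Your approach is correct but more laborious than the paper's. The paper dispatches the corollary in two lines: if $d_H(\mC)\le n-k$ then $d_I(\mC)\le 2d_H(\mC)\le 2n-2k$ by Lemma~\ref{lem:3.1}; otherwise $d_H(\mC)=n-k+1$, and Theorem~\ref{thm:3.6}(ii) gives $q^k=|\mC|\le\tfrac12(q^{n-d_I/2+1}+q^{n-d_I/2})<q^{n-d_I/2+1}$, so $k<n-d_I/2+1$, and since $d_I$ is even this forces $k\le n-d_I/2$. In other words, the paper simply cites the strict Singleton-type bound already established in Theorem~\ref{thm:3.6}(ii) and exploits that $|\mC|$ is an integer power of $q$. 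You instead propose to handle the MDS case by invoking Lemma~\ref{lem:3.3} directly and building a long-LCS pair from overlapping coordinate windows --- this is essentially re-deriving the relevant piece of Theorem~\ref{thm:3.6}(ii) (and is exactly the commented-out argument you spotted). That works, but note a small slip: windows of size $k=n-d_H+1$ produce a common subsequence of length $n-d_H+1$, not $n-d_H+2$, yielding $d_I\le 2d_H-2=2n-2k$ (precisely what is needed) rather than $2d_H-4$. Finally, your opening reference to Theorem~\ref{thm:3.6}(iii) is a red herring, since that part requires $d\ge 2q$, which is not assumed here; it is part~(ii) that carries the argument.
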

\begin{proof} If $d_H(\mC)\le n-k$, then we have $d_I(\mC)\le 2d_H(\mC)\le 2n-2k$. Now assume that $d_H(\mC)= n-k+1\ge 2$. By Theorem \ref{thm:3.6},
we have $k<n-\frac{d_I(\mC)}2+1$, i.e, $k\le n-\frac{d_I(\mC)}2$. The desired result follows.
\end{proof}

\subsection{Upper bound on minimum insdel distance for Reed-Solomon codes}

In the previous section, we show that an insdel code cannot achieve the insdel-metric Singleton bound unless $d_I=2$ or $2n$. Furthermore, we will see in the next section that  insdel codes can approach the insdel-metric Singleton bound arbitrarily when the code alphabet size is sufficiently large. On the other hand, in this section, we will show that code sizes of Reed-Solomon codes are upper bounded by half of the insdel-metric Singleton bound.

For $1\le t\le n$, denote by  $S_t(n)$ the set $\{(i_1,i_2,\dots,i_t)\in[n]^t:\; 1\le i_1<i_2<\cdots<i_t\le n\}$.

\begin{lemma}\label{lem:3.8}
Let $k\ge 3$ and let $\Ga_1,\Ga_2,\dots,\Ga_n\in\F_q$ be pairwise distinct elements. If $n\ge \frac{k(k+1)}2-2$, then there exists two vectors $\bi=(i_1,\dots,i_{k-1}),\bj=(j_1,\dots,j_{k-1})\in S_{k-1}(n)$ with $\bi\neq\bj$ such that $j_\ell<i_\ell\le \frac{k(k+1)}2-2$ for $\ell=1,2,\dots,k-1$ and the matrix
\begin{equation}\label{eq:2x}\begin{pmatrix}
\Ga_{i_1}-\Ga_{j_1}&\Ga_{i_2}-\Ga_{j_2}&\cdots &\Ga_{i_{k-1}}-\Ga_{j_{k-1}}\\
\Ga_{i_1}^2-\Ga_{j_1}^2&\Ga_{i_2}^2-\Ga_{j_2}^2&\cdots &\Ga_{i_{k-1}}^2-\Ga_{j_{k-1}}^2\\
\vdots & \vdots & &\vdots\\
\Ga_{i_1}^{k-1}-\Ga_{j_1}^{k-1}&\Ga_{i_2}^{k-1}-\Ga_{j_2}^{k-1}&\cdots &\Ga_{i_{k-1}}^{k-1}-\Ga_{j_{k-1}}^{k-1}\\
\end{pmatrix}\in\F_q^{(k-1)\times (k-1)}\end{equation}
is invertible.
\end{lemma}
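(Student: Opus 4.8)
The plan is to pick a single candidate pair $(\bi,\bj)$ and show that the associated matrix is invertible by a direct determinant computation, using the fact that the $\Ga_i$ are distinct. First I would choose the indices so that the shift between $\bj$ and $\bi$ is uniform and easy to bookkeep: specifically, I would try to take $j_\ell = \ell$ and $i_\ell = \ell + (k-1)$ (or some similar staircase), so that both lie in $S_{k-1}(n)$, satisfy $j_\ell < i_\ell$, and all entries are at most $2(k-1) \le \frac{k(k+1)}{2}-2$ for $k \ge 3$. Actually, the cleaner route, and the one I expect the authors take, is to \emph{not} fix the pair in advance but to argue by a counting/pigeonhole argument: consider all $\binom{k(k+1)/2-2}{k-1}$ admissible index vectors lying in $[\frac{k(k+1)}{2}-2]$, and among the ordered pairs $(\bi,\bj)$ with $j_\ell<i_\ell$ componentwise, show at least one yields an invertible matrix. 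But the core of the lemma is really the determinant nonvanishing, so let me describe that.

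The key algebraic observation is that each column of the matrix~\eqref{eq:2x} is
\[
\begin{pmatrix} \Ga_{i_\ell}-\Ga_{j_\ell} \\ \Ga_{i_\ell}^2-\Ga_{j_\ell}^2 \\ \vdots \\ \Ga_{i_\ell}^{k-1}-\Ga_{j_\ell}^{k-1}\end{pmatrix}
= V(\Ga_{i_\ell}) - V(\Ga_{j_\ell}),
\]
where $V(t) = (t, t^2, \dots, t^{k-1})^{\mathsf T}$. So the matrix is invertible iff the $k-1$ vectors $V(\Ga_{i_\ell})-V(\Ga_{j_\ell})$ are linearly independent in $\F_q^{k-1}$. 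Now $V$ parametrizes (the affine part of) the moment curve, and the differences $V(\Ga_{i_\ell})-V(\Ga_{j_\ell})$ are secant directions. A clean way to detect independence: the determinant of~\eqref{eq:2x} equals, up to the factor $\prod_\ell \Ga_{i_\ell}$ pulled out if all $\Ga_{i_\ell}\ne 0$... actually more usefully, expanding by multilinearity in the $2(k-1)$ points $\Ga_{i_1},\dots,\Ga_{i_{k-1}},\Ga_{j_1},\dots,\Ga_{j_{k-1}}$ writes $\det$ as a sum of $2^{k-1}$ signed Vandermonde-type minors; but this signs-cancellation approach is delicate. Instead I would use the following: the determinant is a polynomial in the $2(k-1)$ variables $\Ga_{i_\ell},\Ga_{j_\ell}$ that vanishes whenever $\Ga_{i_\ell}=\Ga_{j_\ell}$ for some $\ell$ (then column $\ell$ is zero), hence it is divisible by $\prod_{\ell=1}^{k-1}(\Ga_{i_\ell}-\Ga_{j_\ell})$; by a degree count this quotient is a polynomial of bounded degree, and one identifies it (e.g.\ by specialization or by the theory of divided differences) as a nonzero polynomial — in fact a generalized Vandermonde / Schur-type expression — which then does not vanish for generic distinct choices.

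The main obstacle, and where I would spend the bulk of the argument, is turning ``does not vanish generically'' into ``does not vanish for \emph{our} fixed tuple $\Ga_1,\dots,\Ga_n$'', since the $\Ga_i$ are given, not free. This is exactly why the hypothesis $n \ge \frac{k(k+1)}{2}-2$ appears: it guarantees enough indices below the threshold $\frac{k(k+1)}{2}-2$ that we get to \emph{choose} the pair $(\bi,\bj)$ among many admissible ones. So the real structure of the proof should be: (1) reduce invertibility to linear independence of the secant vectors $V(\Ga_{i_\ell})-V(\Ga_{j_\ell})$; (2) show that if \emph{every} admissible pair failed, one would get a strong structural constraint on the multiset $\{\Ga_1,\dots,\Ga_{k(k+1)/2-2}\}$ — e.g.\ via an inductive/greedy selection where at step $\ell$ one picks $i_\ell > j_\ell$ so that $V(\Ga_{i_\ell})-V(\Ga_{j_\ell})$ is outside the span of the previously chosen differences — and (3) a dimension count shows this greedy process cannot get stuck before step $k-1$ provided there are at least $\frac{k(k+1)}{2}-2$ points available, because the ``forbidden'' set at each step is cut out by a proper subspace condition and hence misses enough points. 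I expect step (3), the bookkeeping that $\ell + \binom{\ell+1}{2}$ or similar stays $\le \frac{k(k+1)}{2}-2$, to be the delicate counting that pins down the exact hypothesis, while step (1) is immediate and step (2) is the conceptual heart.
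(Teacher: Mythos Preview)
Your first concrete attempt---fixing $j_\ell=\ell$, $i_\ell=\ell+(k-1)$ and computing the determinant directly---already fails for $k=3$: with $\bj=(1,2)$, $\bi=(3,4)$ the $2\times 2$ determinant equals $(\Ga_3-\Ga_1)(\Ga_4-\Ga_2)\bigl((\Ga_4+\Ga_2)-(\Ga_3+\Ga_1)\bigr)$, which vanishes whenever $\Ga_1+\Ga_3=\Ga_2+\Ga_4$ even for distinct $\Ga_i$. So no single pair works for all tuples, and the pigeonhole suggestion has no evident pigeonhole.

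Your final outline (greedy/inductive selection of columns) is the right idea and is essentially what the paper does, but the mechanism you name is not quite correct. The bad set at each step is \emph{not} controlled by a ``proper subspace condition'' on $\F_q$; rather, the paper argues as follows. Assume inductively that $(i_1,\dots,i_{k-1})$, $(j_1,\dots,j_{k-1})$ have been found with the $(k-1)\times(k-1)$ determinant nonzero. Set $j_k:=j_{k-1}+1$ and view the $k\times k$ determinant (with last column $(x-\Ga_{j_k},\dots,x^k-\Ga_{j_k}^k)^{\mathsf T}$) as a polynomial $F(x)$ in the single variable $x=\Ga_{i_k}$. Expanding along the last column, the coefficient of $x^k$ is exactly the inductive $(k-1)\times(k-1)$ determinant, hence nonzero; so $F$ is a nonzero polynomial of degree $k$ with at most $k$ roots. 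Since there are $k+1$ indices in $\{i_{k-1}+1,\dots,\tfrac{(k+1)(k+2)}{2}-2\}$, some $\Ga_{i_k}$ among them is not a root, and the order constraints $j_k<i_k$, $i_k\le\tfrac{(k+1)(k+2)}{2}-2$ follow. The base case $k=3$ uses the special pair $\bi=(3,4)$, $\bj=(1,3)$, for which the determinant factors as $(\Ga_3-\Ga_1)(\Ga_4-\Ga_3)(\Ga_4-\Ga_1)\neq 0$.

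So the two pieces you are missing are: (i) the inductive hypothesis is precisely what makes the leading coefficient of $F(x)$ nonzero, and (ii) the forbidden set is the zero locus of a degree-$k$ univariate polynomial, which is what turns the index budget $\tfrac{k(k+1)}{2}-2$ into exactly the right count.
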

\begin{proof} We prove by induction and start with $k=3$.

%Since $\Ga_3+\Ga_1\neq\Ga_3+\Ga_2$, it is clear that either $\Ga_3+\Ga_1\neq\Ga_4+\Ga_3$ or $\Ga_3+\Ga_2\neq\Ga_4+\Ga_3$.Without loss of generality, we may assume that  $\Ga_3+\Ga_1\neq\Ga_4+\Ga_3$. Thus, the matrix
Since the determinant
\[\det\begin{pmatrix}
\Ga_3-\Ga_1&\Ga_4-\Ga_3\\
\Ga_3^2-\Ga_1^2&\Ga_4^2-\Ga_3^2
\end{pmatrix}=(a_3-a_1)(a_4-a_3)(a_4-a_1)\neq0,\] we may take $\bi=(3,4)$ and $\bj=(1,3)$. This completes the proof for $k=3$.

Now assume that the result is true for $k$ with $k\ge 3$, i.e., we have found $\bi=(i_1,\dots,i_{k-1}),\bj=(j_1,\dots,j_{k-1})\in S_{k-1}(n)$ with $\bi\neq\bj$ such that  $j_\ell< i_\ell\le \frac{k(k+1)}2-2$ for $\ell=1,2,\dots,k-1$ and  the matrix \eqref{eq:2x} is invertible.  Choose $j_k=j_{k-1}+1$ and consider the following determinant
\begin{equation}\label{eq:3x}F(x):=\det\begin{pmatrix}
\Ga_{i_1}-\Ga_{j_1}&\Ga_{i_2}-\Ga_{j_2}&\cdots &\Ga_{i_{k-1}}-\Ga_{j_{k-1}}&x-\Ga_{j_{k}}\\
\Ga_{i_1}^2-\Ga_{j_1}^2&\Ga_{i_2}^2-\Ga_{j_2}^2&\cdots&\Ga_{i_{k-1}}^2-\Ga_{j_{k-1}}^2 &x^2-\Ga_{j_{k}}^2\\
\vdots & \vdots & &\vdots&\vdots\\
\Ga_{i_1}^{k-1}-\Ga_{j_1}^{k-1}&\Ga_{i_2}^{k-1}-\Ga_{j_2}^{k-1}&\cdots &\Ga_{i_{k-1}}^{k-1}-\Ga_{j_{k-1}}^{k-1}&x^{k-1}-\Ga_{j_{k}}^{k-1}\\
\Ga_{i_1}^{k}-\Ga_{j_1}^{k}&\Ga_{i_2}^{k}-\Ga_{j_2}^{k}&\cdots &\Ga_{i_{k-1}}^k-\Ga_{j_{k-1}}^k&x^{k}-\Ga_{j_{k}}^{k}\\
\end{pmatrix}.\end{equation}
 It is a polynomial in $x$. By induction, the upper left diagonal $(k-1)\times(k-1)$ matrix is invertible. Hence, $F(x)$ is a nonzero polynomial of degree $k$. This implies that $F(x)$ has at most $k$ roots.  As the set $\{i_{k-1}+1,i_{k-1}+2,\dots,\frac{(k+1)(k+2)}2-2\}$ has size $\frac{(k+1)(k+2)}2-2-i_{k-1}\ge \frac{(k+1)(k+2)}2-2-\left(\frac{k(k+1)}2-2\right)=k+1$ elements, one can find  $i_k\in \{i_{k-1}+1,i_{k-1}+2,\dots,\frac{(k+1)(k+2)}2-2\}$ such that the element $\Ga_{i_k}$  is not a root of $F(x)$. Furthermore, we have $j_k=j_{k-1}+1< i_{k-1}+1\le i_k$ and  $i_k\le \frac{(k+1)(k+2)}2-2$. This completes the induction.
\end{proof}

Through Lemma ~\ref{lem:3.8}, we have the result as follow.
%%%%%%%%%%%%%%%%%%%%%%%%%%%%%%%%%%%%%%%%%%%%%%%%%%%%%%%%%%%%%%%%%%%%%
\begin{theorem}\label{thm:3.9}
If $k\ge 3$ and $n\ge \frac{k(k+1)}2+k-3$, then every Reed-Solomon code  with length $n$ and dimension $k$ has insdel distance at most $2n-4k+4.$
\end{theorem}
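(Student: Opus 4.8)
The plan is to exploit Lemma~\ref{lem:3.8} to exhibit, for every Reed-Solomon code $\RS_\bGa(n,k)$, two distinct codewords whose longest common subsequence has length at least $2k-2$, which by Lemma~\ref{lem:2.1} forces $d_I(\RS_\bGa(n,k)) \le 2n - 2(2k-2) = 2n-4k+4$. The natural candidates are the zero codeword $\mathbf{0}$ (coming from $f \equiv 0$) and the codeword $\bc = (g(\Ga_1),\dots,g(\Ga_n))$ coming from some nonzero $g(x) \in \F_q[x]_{<k}$ with $g(0)=0$, i.e.\ $g(x) = \sum_{\ell=1}^{k-1} g_\ell x^\ell$. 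A common subsequence of $\mathbf 0$ and $\bc$ of length $2k-2$ amounts to choosing indices $j_1 < \cdots < j_{k-1}$ where $\bc$ vanishes and indices, interleaved appropriately, where $\mathbf 0$ is read off; since $\mathbf 0$ is all zeros, what I really need is $k-1$ coordinates $j_1<\cdots<j_{k-1}$ with $g(\Ga_{j_\ell})=0$ and another $k-1$ coordinates $i_1<\cdots<i_{k-1}$ (from $\bc$) that can be merged with the $j$'s into a single increasing sequence of length $2k-2$, with the values matching. The cleanest way to force the merge is to arrange $j_\ell < i_\ell$ for every $\ell$ and $i_\ell < j_{\ell+1}$ — but the latter is too strong; instead one interleaves greedily, and the condition $j_\ell < i_\ell$ together with $i_\ell \le \frac{k(k+1)}{2}-2$ and a little bookkeeping on the remaining $n - (\tfrac{k(k+1)}{2}-2) \ge k-1$ coordinates $> \frac{k(k+1)}{2}-2$ supplies the extra room needed to complete an increasing sequence of length $2k-2$.

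Concretely, I would proceed as follows. First invoke Lemma~\ref{lem:3.8} with the hypothesis $n \ge \frac{k(k+1)}{2}+k-3 \ge \frac{k(k+1)}{2}-2$ to obtain $\bi=(i_1,\dots,i_{k-1})$ and $\bj=(j_1,\dots,j_{k-1})$ in $S_{k-1}(n)$ with $\bi\ne\bj$, with $j_\ell < i_\ell \le \frac{k(k+1)}{2}-2$ for all $\ell$, and with the matrix~\eqref{eq:2x} invertible. Second, I claim there is a nonzero polynomial $g(x)=\sum_{\ell=1}^{k-1} g_\ell x^\ell$ (no constant term) such that $g(\Ga_{i_m}) = g(\Ga_{j_m})$ for $m=1,\dots,k-1$: indeed $g(\Ga_{i_m}) - g(\Ga_{j_m}) = \sum_{\ell=1}^{k-1} g_\ell (\Ga_{i_m}^\ell - \Ga_{j_m}^\ell)$, so the system $g(\Ga_{i_m}) = g(\Ga_{j_m})$ in the unknowns $(g_1,\dots,g_{k-1})$ has coefficient matrix precisely the transpose of~\eqref{eq:2x}, which is invertible; hence its only solution is $g=0$ — so I must instead impose $k-2$ of these equalities (dropping one, say the last) plus normalization $g_1 = 1$, or more simply: pick any nonzero $g$ of degree $<k$ with $g(0)=0$ whose value set identifies the pairs $(\Ga_{i_m},\Ga_{j_m})$ for $m=1,\dots,k-1$ — this is a system of $k-1$ linear conditions $\sum_\ell g_\ell(\Ga_{i_m}^\ell-\Ga_{j_m}^\ell)=0$ in $k-1$ unknowns whose matrix is \eqref{eq:2x}$^{\mathsf T}$, invertible, forcing $g=0$. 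The resolution is that we do NOT want all $k-1$ equalities: we want a nonzero codeword $\bc$ together with $k-1$ \emph{zero coordinates} of $\bc$ and $k-1$ coordinates where $\bc$ agrees with $\mathbf 0$... so actually the right reading of Lemma~\ref{lem:3.8} is different, and I would instead apply it to two \emph{distinct nonzero} polynomials $f_1,f_2$ of degree $<k$: the invertibility of~\eqref{eq:2x} guarantees I can choose $f_1 - f_2 =: h$ of degree $<k$ with prescribed nonzero leading behavior such that $h(\Ga_{i_m}) = h(\Ga_{j_m})$ fails to be forced — let me restate: the existence of a nonzero polynomial $p(x)$ of degree $\le k-1$ with $p(0)=0$ and $p(\Ga_{i_m}) = p(\Ga_{j_m})$ for all $m$ is impossible by invertibility, so I take $p$ of degree $\le k$ instead, i.e.\ dimension $k$ means polynomials of degree $<k$ have $k$ coefficients $g_0,\dots,g_{k-1}$, and the $k-1$ equalities $g(\Ga_{i_m})=g(\Ga_{j_m})$ are $k-1$ conditions on $k$ unknowns, hence have a nonzero solution $g$; moreover that $g$ is nonconstant, else all equalities are trivial but then $\bc$ would be constant — rule this out using that the map $(g_1,\dots,g_{k-1}) \mapsto (g(\Ga_{i_m})-g(\Ga_{j_m}))_m$ has invertible matrix~\eqref{eq:2x}, so a solution with $g_0$ free and not all of $g_1,\dots,g_{k-1}$ zero does not exist unless we use the extra degree of freedom from $g_0$... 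This forces $g$ to be a nonzero constant, which is degenerate.

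Given the circularity above, the correct plan is: take the two codewords to be $\bc_1$ from $f_1(x) = 1$ (the constant polynomial) and $\bc_2$ from a cleverly chosen $f_2$, OR — and this is what I believe the intended argument is — use $\mathbf 0$ and $\bc$ from $g$ of degree $\le k-1$ with $g(\Ga_{j_m}) = 0$ for $m = 1,\dots,k-1$ (these are $k-1$ homogeneous conditions on the $k$ coefficients of $g$, so a nonzero $g$ exists, necessarily of degree exactly $k-1$ with these $k-1$ as all its roots among the $\Ga$'s), and then the common subsequence of length $2k-2$ is built from the $k-1$ positions $j_1<\cdots<j_{k-1}$ (reading $0$ from both $\mathbf 0$ and $\bc$) together with $k-1$ further positions read from $\mathbf 0$ alone that interleave below/above — here the rôle of $\bi$ with $j_\ell < i_\ell \le \frac{k(k+1)}{2}-2$ and the spare $\ge k-1$ coordinates beyond $\frac{k(k+1)}{2}-2$ is exactly to furnish those $k-1$ interleaving positions and guarantee the total forms an increasing index sequence of length $2k-2$ in each of $\mathbf 0$ and $\bc$ with matching values (all equal to $0$, since $\mathbf 0$ is all-zero). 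I would finish by writing out this interleaving explicitly and applying Lemma~\ref{lem:2.1}: $d_I(\RS_\bGa(n,k)) \le d_I(\mathbf 0,\bc) = 2n - 2\ell_\LCS(\mathbf 0,\bc) \le 2n - 2(2k-2) = 2n-4k+4$.

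The main obstacle is precisely pinning down the combinatorial interleaving: one must verify that from the data of Lemma~\ref{lem:3.8} (the nested blocks $j_\ell < i_\ell \le \frac{k(k+1)}{2}-2$, plus the guaranteed surplus of at least $k-1$ indices exceeding $\frac{k(k+1)}{2}-2$, coming from $n \ge \frac{k(k+1)}{2}+k-3$) one can extract a single strictly increasing index sequence of length $2k-2$ that is realized as a common subsequence of $\mathbf 0$ and $\bc$. I expect this to require carefully distinguishing which indices are read from $\mathbf 0$ versus $\bc$, and a short argument (possibly another small induction, or a direct counting of available slots) that the surplus coordinates beyond $\frac{k(k+1)}{2}-2$ supply enough room; the linear-algebra input (invertibility of~\eqref{eq:2x}) is what certifies that the chosen polynomial $g$ is genuinely nonzero and of full degree $k-1$, so that $\bc \ne \mathbf 0$ and the bound is nontrivial.
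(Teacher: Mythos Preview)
Your final approach has a genuine gap. If you compare $\mathbf 0$ with a codeword $\bc$ coming from a nonzero $g\in\F_q[x]_{<k}$, then every common subsequence of $\mathbf 0$ and $\bc$ must consist entirely of zeros, so $\ell_{\LCS}(\mathbf 0,\bc)$ equals the number of zero coordinates of $\bc$, which is at most $\deg g\le k-1$. There is no way to ``read $k-1$ further positions from $\mathbf 0$ alone'' and have them count toward the common subsequence: by definition a common subsequence must be realised in \emph{both} words. Your construction therefore only yields $d_I\le 2n-2(k-1)=2n-2k+2$, which is the trivial bound coming from $d_I\le 2d_H$, not $2n-4k+4$. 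The invertibility of~\eqref{eq:2x} plays no role in this argument; Lemma~\ref{lem:3.8} is never actually used.

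The paper's argument is quite different from anything you attempted. It uses the hypothesis $n\ge\frac{k(k+1)}2+k-3$ to \emph{extend} both index vectors from Lemma~\ref{lem:3.8} to length $2k-2$: choose $i_{k}<\cdots<i_{2k-2}$ and $j_{k}<\cdots<j_{2k-2}$ among the $\ge k-1$ coordinates beyond $\frac{k(k+1)}2-2$. One then seeks \emph{two} polynomials $f(x)=a_0+a_1x+\cdots+a_{k-1}x^{k-1}$ and $g(x)=-b_1x-\cdots-b_{k-1}x^{k-1}$ (so $g(0)=0$ is imposed) with $f(\Ga_{i_m})=g(\Ga_{j_m})$ for all $m=1,\dots,2k-2$. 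This is a homogeneous linear system with $2k-2$ equations in the $2k-1$ unknowns $(a_0,a_1,\dots,a_{k-1},b_1,\dots,b_{k-1})$, hence has a nonzero solution. If $a_0\ne 0$ then $f\ne g$ since $g(0)=0\ne a_0=f(0)$. If $a_0=0$, one checks that $f=g$ would mean $(a_1,\dots,a_{k-1})$ lies in the kernel of the transpose of~\eqref{eq:2x}; invertibility then forces the whole solution to be zero, a contradiction. Either way $f\ne g$ and $(f(\Ga_{i_1}),\dots,f(\Ga_{i_{2k-2}}))=(g(\Ga_{j_1}),\dots,g(\Ga_{j_{2k-2}}))$ is a common subsequence of $\bc_f$ and $\bc_g$ of length $2k-2$, giving $d_I\le 2n-4k+4$. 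Thus the purpose of the invertibility in Lemma~\ref{lem:3.8} is not to rule out constant solutions (as you tried) but to guarantee $f\ne g$ in the degenerate case $a_0=0$.
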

\begin{proof} Given a $k$-dimensional Reed-Solomon code $\RS_\bGa(n,k)$ with $n\ge \frac{k(k+1)}2+k-3$, we consider the vectors $\bi,\bj$ obtained
in Lemma \ref{lem:3.8}. As $i_{k-1}\le \frac{k(k+1)}2-2$ and $n\ge \frac{k(k+1)}2+k-3$
we can choose $i_k<i_{k+1}<\cdots<i_{2k-2}\le n$ with $i_k> i_{k-1}$ and $j_k<j_{k+1}<\cdots<j_{2k-2}\le n$ with $j_k> j_{k-1}$. Define   the matrix
\[A=\begin{pmatrix} 1& 1& \cdots &1&1&\cdots&1 \\
\Ga_{i_1}&\Ga_{i_2}&\cdots &\Ga_{i_{k-1}}&\Ga_{i_k}&\cdots&\Ga_{i_{2k-2}}\\
\vdots & \vdots & &\vdots&\vdots& &\vdots\\
\Ga_{i_1}^{k-1}&\Ga_{i_2}^{k-1}&\cdots &\Ga_{i_{k-1}}^{k-1}&\Ga_{i_k}^{k-1}&\cdots&\Ga_{i_{2k-2}}^{k-1}\\
 \Ga_{j_1}&\Ga_{j_2}&\cdots &\Ga_{j_{k-1}}&\Ga_{j_k}&\cdots&\Ga_{j_{2k-2}}\\
\vdots & \vdots & &\vdots&\vdots& &\vdots\\
\Ga_{j_1}^{k-1}&\Ga_{j_2}^{k-1}&\cdots &\Ga_{j_{k-1}}^{k-1}&\Ga_{j_k}^{k-1}&\cdots&\Ga_{j_{2k-2}}^{k-1}\\
\end{pmatrix}\in\F_q^{(2k-1)\times (2k-2)}.\]

Let $V\subseteq \F_q^{2k-1}$ be the solution space of ${\bf x}A={\bf 0}$. Since the equation has $2k-1$ variables and $2k-2$ equations, we have $\dim(V)\ge 2k-1-(2k-2)=1.$

{\bf Case 1.} $V\not\subseteq \{0\}\times\F_q^{2k-2}.$

This means that there exists a nonzero solution $(a_0,\ba,\bb)=(a_0,a_1,\cdots,a_{k-1},b_1,\cdots, b_{k-1})\in V$ of the equation system $\bx{A}=\bo$ with $a_0\neq 0$.
Let $f(x)=a_0+a_1x+\cdots+a_{k-1}x^{k-1}$ and $g(x)=-b_1x-b_2x^2-\cdots -b_{k-1}x^{k-1}$. Then $f(x)\neq g(x)$ and $(f(\Ga_{i_1}),f(\Ga_{i_2}),\cdots,f(\Ga_{i_{2k-2}}))=(g(\Ga_{j_1}),g(\Ga_{j_2}),\cdots,g(\Ga_{j_{2k-2}}))$. Hence, $\ell_\LCS(\bc_f,\bc_g)\ge 2k-2$, where $\bc_f$ stands for the codeword $(f(\Ga_1),f(\Ga_2),\dots,f(\Ga_n))$. Therefore, $d_I(\bc_f,\bc_g)\le 2n-2(2k-2)=2n-4k+4$.

{\bf Case 2.} $V\subseteq \{0\}\times\F_q^{2k-2}.$

Consider the matrix
\[B=\begin{pmatrix}
\Ga_{i_1}-\Ga_{j_1}&\Ga_{i_2}-\Ga_{j_2}&\cdots &\Ga_{i_{2k-2}}-\Ga_{j_{2k-2}}\\
\Ga_{i_1}^2-\Ga_{j_1}^2&\Ga_{i_2}^2-\Ga_{j_2}^2&\cdots &\Ga_{i_{2k-2}}^2-\Ga_{j_{2k-2}}^2\\
\vdots & \vdots & &\vdots\\
\Ga_{i_1}^{k-1}-\Ga_{j_1}^{k-1}&\Ga_{i_2}^{k-1}-\Ga_{j_2}^{k-1}&\cdots &\Ga_{i_{2k-2}}^{k-1}-\Ga_{j_{2k-2}}^{k-1}\\
\end{pmatrix}\in\F_q^{(k-1)\times (2k-2)}.\]
Choose a nonzero solution $({ 0, {\bf a, b}})\in V$. We claim $\ba\neq-\bb$. Otherwise, we would have that $\ba$ is a solution of $\bx B={\bf 0}.$  Since the first $k-1$ columns of $B$ form a $(k-1)\times (k-1)$ submatrix that is invertible, this forces that $\ba=\bo$. Thus, $({ 0, {\bf a, b}})$ is the zero solution of $\bx A={\bf 0}.$  This is a contradiction to our choice of a nonzero solution $({ 0, {\bf a, b}})\in V$.
Let $f(x)=a_1x+\cdots+a_{k-1}x^{k-1}$ and $g(x)=-b_1x-b_2x^2-\cdots -b_{k-1}x^{k-1}$. Then $f(x)\neq g(x)$ and $(f(\Ga_{i_1}),f(\Ga_{i_2}),\cdots,f(\Ga_{i_{2k-2}}))=(g(\Ga_{j_1}),g(\Ga_{j_2}),\cdots,g(\Ga_{j_{2k-2}}))$. Hence, $\ell_\LCS(\bc_f,\bc_g)\ge 2k-2$ and therefore $d_I(\bc_f,\bc_g)\le 2n-2(2k-2)=2n-4k+4$.
This completes the proof.
\end{proof}

\begin{rmk}\begin{itemize}
\item[(i)]
In the proof of Theorem \ref{thm:3.9}, as long as the matrix $B$ has rank $k-1$, we can obtain the desired result. However, in the proof we require a stronger condition, namely, the first $k-1$ columns of $B$ form an invertible $(k-1)\times (k-1)$ submatrix.  Thus, if we could get a suitable condition under which the rank of $B$ is $k-1$,  the constraint $n\ge \frac{k(k+1)}2+k-3$ could be relaxed.
\item[(ii)] In \cite{CST2021}, it is proved that when the field size $q$ is big enough, we can always have Reed-Solomon codes  with length $n$, dimension $k\ge 3$ and insdel distance equal to $2n-4k+4.$ However, one does not know if an $[n,k]$ Reed-Solomon code with insdel distance $d$ can have insdel distance beyond $2n-4k+4$.
    \item[(iii)]
Theorem \ref{thm:3.9} shows that Reed-Solomon codes with length $n$ and dimension $k>2$ can not achieve the maximum insdel distance $2n-2k$. On the other hand, $q$-ary $(n, M)$ nonlinear codes can have insdel distance at least $2n-2k+1-o(1)$ when $q$ is large enough due to a nonconstructive result given in the next section, where $k=\log_qM$ (see Remark~\ref{rm:5}(i)).
\end{itemize}
\end{rmk}

\subsection{Lower bound on field size}
In the previous section, we derived some upper bounds on the insdel distance and size of an insdel code. In this section, we focus on lower bounds on field size of Singleton-optimal codes with given length, code size and insdel distance. For convenience, in this subsection we replace the code alphabet set $[q]$ by the residue ring $\ZZ_q$ through the map $i\mapsto\bar{i}$. Thus, we can speak of supports of vectors, namely, the support $\supp(\bu)$ of a vector $\bu=(u_1,u_2,\dots,u_n)\in\ZZ_q^n$ is defined to be $\{i\in[n]:\; u_i\neq {0}\}$.

\begin{lemma}\label{lem:3.10}
Let $\mC$ be a $q$-ary Hamming-metric Singleton-optimal code with length $n$ and $k=\log_q|\mC|$. If $\mC$ contains the zero vector $\bo$, then for any $R\subseteq[n]$ with $|R|=n-k+1$, there are exactly $q-1$ codewords whose supports are exactly equal to $R$.
\end{lemma}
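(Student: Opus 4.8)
The plan is to combine the full-projection property of Lemma~\ref{lem:3.3} with the minimum-distance constraint coming from $\bo\in\mC$. Since $\mC$ is Hamming-metric Singleton-optimal with $|\mC|=q^k$, we have $q^k=q^{n-d_H(\mC)+1}$, hence $d_H(\mC)=n-k+1$; in particular $d_H(\mC)=|R|$. Write $T=[n]\setminus R$, so that $|T|=n-(n-k+1)=k-1$, fix any $j\in R$, and set $S=T\cup\{j\}$, so $|S|=k=n-d_H(\mC)+1$. By Lemma~\ref{lem:3.3} (and its proof), the projection map $\mC\to[q]^{S}$ is a bijection: it is surjective onto $[q]^{S}$, and any two codewords agreeing on a set of size $n-d_H(\mC)+1$ differ in at most $d_H(\mC)-1$ positions, hence coincide.

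Next I count the codewords whose support is contained in $R$. A codeword $\bc$ satisfies $\supp(\bc)\subseteq R$ if and only if $c_i=0$ for all $i\in T$; under the bijection above these correspond exactly to the vectors of $[q]^{S}$ vanishing on $T$, and there are precisely $q$ of them because the coordinate indexed by $j$ is free. Thus $\mC$ has exactly $q$ codewords with support contained in $R$, one of which is $\bo$. To finish, I show each of the other $q-1$ codewords has support equal to $R$. Let $\bc\in\mC$ with $\bc\neq\bo$ and $\supp(\bc)\subseteq R$, and suppose towards a contradiction that $c_{i_0}=0$ for some $i_0\in R$. Then $\bc$ vanishes on $T\cup\{i_0\}$, a set of size $k$, so $\wt(\bc)\le n-k$; since $\bo\in\mC$, this yields $d_H(\bc,\bo)=\wt(\bc)\le n-k<n-k+1=d_H(\mC)$, contradicting the minimum distance of $\mC$. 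Hence $c_i\neq0$ for every $i\in R$, i.e.\ $\supp(\bc)=R$, and exactly $q-1$ codewords of $\mC$ have support equal to $R$.

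The argument is elementary once Lemma~\ref{lem:3.3} is in hand, so I do not anticipate a real obstacle; the only subtlety is the twofold use of size-$k$ projections — once to count the codewords vanishing on the $(k-1)$-set $T$, and once through the distance bound (which is equivalent to injectivity of the size-$k$ projection) to preclude an extra zero inside $R$. It is worth a quick sanity check on the extremes $k=1$ (where $R=[n]$ and the $q-1$ nonzero codewords are automatically nowhere zero) and $k=n$ (where $R$ is a singleton), both of which the argument covers.
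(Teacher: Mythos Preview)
Your proof is correct and rests on the same core idea as the paper's: project onto a $k$-set $S$ consisting of the complement $T=[n]\setminus R$ together with one point of $R$, use Lemma~\ref{lem:3.3} to control that projection, and then invoke the minimum distance $d_H(\mC)=n-k+1$ against $\bo$ to force full support on $R$. The paper in fact chooses $R=\{k,\dots,n\}$ and $S=\{1,\dots,k\}$, which is exactly your $T\cup\{j\}$ with $j=k$.

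The one genuine difference is structural. The paper proves ``at most $q-1$'' and ``at least $q-1$'' separately: for the upper bound it uses a pigeonhole argument on a single coordinate of $R$, and for the lower bound it exhibits $q-1$ codewords via surjectivity of the projection to $S$. You instead use \emph{bijectivity} of the projection (which follows immediately from surjectivity together with $|\mC|=q^{|S|}$, or from the distance argument you give) to count \emph{exactly} $q$ codewords with support contained in $R$, and then show all but $\bo$ have support equal to $R$. This one-pass count is a tidy streamlining: it makes the separate ``at most'' step unnecessary and makes transparent that the $q-1$ codewords you find are \emph{all} the codewords supported on $R$.
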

\begin{proof} First, let us show that there are at most $q-1$ codewords  whose supports are exactly equal to $R$. Suppose that there are at least $q$ codewords $\bc_1,\bc_2,\dots,\bc_q$ whose supports are exactly equal to $R$. Let $i\in R$. Then  there are at least two codewords, say $\bc_1,\bc_2$, that are equal at position $i$. Hence, $d_H(\bc_1,\bc_2)\le |R|-1=n-k$. This is contradiction to the fact that $\mC$ is Hamming-metric Singleton-optimal.

Now we show that there are at least $q-1$ codewords  whose supports are exactly equal to $R$.
Without loss of generality, we may assume that $R=\{k,k+1,\dots,n\}$. Put $T=\{1,2,\dots,k\}$ and define $\mA:=\{\bc_T:\; \bc\in\mC\}$. By Lemma \ref{lem:3.3} we get that  $\mA=\ZZ_q^k$. Hence, we get $\{(0,0,\dots,0,a)\in\ZZ_q^k:\; a\in\ZZ_q\}$ is a subset of $\mA$. This implies that there are $q$ codewords $\bc_0,\bc_1,\dots,\bc_{q-1}$ such that $(\bc_i)_T=(0,0,\dots,0,i)\in\ZZ_q^k$ for $0\le i\le q-1$. We claim that $\bc_0=\bo$. Otherwise, we would have $n-k+1=d_H(\mC)\le d_H(\bc_0,\bo)\le n-|T|=n-k$. This is a contradiction. Next we claim that the support of $\bc_i$ is $R$ for all $1\le i\le q-1$. This is because $n-k+1=d_H(\mC)\le d_H(\bc_i,\bo)=|\supp((\bc_i)_R)|\le n-k+1$.
This implies that $R=\supp((\bc_i)_R)=\supp(\bc_i)$ for all $1\le i\le q-1$.
\end{proof}

\begin{theorem}\label{thm:3.11} Let $q,n,k$ and $\delta$ be positive integers such that $\min\{0,2-k\}\leq \delta\leq n-k+1$ and
\begin{equation}~\label{eq:12}
\frac{q^\Gd}{q-1}\le \left\{\begin{array}{ll}
{\left\lfloor\frac{n+k+4}{2}\right\rfloor\choose k-1}&\mbox{{\rm when} $k\ge\frac{n+1}{3}$},\\
{n-k-1\choose k-1} & \mbox{{\rm when} $k\le\frac{n+1}{3}$}.
\end{array}
\right.
\end{equation}
Then every $q$-ary Hamming-metric Singleton-optimal code $\mathcal{C}\subseteq \mathbb{Z}_q^n$ of size $q^k$ has insdel distance at most $2n-2k+2-2\delta.$
\end{theorem}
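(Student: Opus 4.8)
The plan is to exhibit, for every Hamming‑metric Singleton‑optimal $\mC\subseteq\ZZ_q^n$ of size $q^k$, two distinct codewords $\bc_1,\bc_2$ with $\ell_\LCS(\bc_1,\bc_2)\ge k-1+\delta$; by Lemma~\ref{lem:2.1} this gives $d_I(\bc_1,\bc_2)=2n-2\ell_\LCS(\bc_1,\bc_2)\le 2n-2k+2-2\delta$, hence the asserted bound on $d_I(\mC)$. (The constraint on $\delta$ is only of interest for $\delta\ge2$; smaller $\delta$ make the bound weak, and the extreme cases $d_I\in\{2,2n\}$ are already settled.) First I would reduce to the case $\bo\in\mC$, so that Lemma~\ref{lem:3.10} is available — this is the one point that needs care, since translation by a non‑constant vector does not preserve insdel distance, but translation by a constant vector does, and by Lemma~\ref{lem:3.3} one can instead work relative to a codeword that vanishes on a suitable $k$‑subset, which then plays the role of $\bo$. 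Granting this, Lemma~\ref{lem:3.10} tells us that for every $(k-1)$‑subset $Z\subseteq[n]$ there are exactly $q-1$ codewords whose zero set equals $Z$, i.e. of Hamming weight exactly $n-k+1$.

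Next I would fix $\delta$ coordinates $p_1<p_2<\cdots<p_\delta$ and a set $S\subseteq[n]\setminus\{p_1,\dots,p_\delta\}$ of size $N$ (the quantity appearing in \eqref{eq:12}), positioned so that the number of elements of any $(k-1)$‑subset of $S$ lying below $p_j$ depends only on $j$, not on the subset (e.g. every $p_j$ lies entirely to one side of $S$, or $S$ is split by the $p_j$ in a fixed way). The family $\mathcal F$ of codewords whose zero set is one of the $\binom{N}{k-1}$ many $(k-1)$‑subsets of $S$ then has size exactly $(q-1)\binom{N}{k-1}$ by Lemma~\ref{lem:3.10}, and every member of $\mathcal F$ is nonzero at each $p_j$. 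Since $\bc\mapsto(\bc_{p_1},\dots,\bc_{p_\delta})$ maps $\mathcal F$ into $(\ZZ_q\setminus\{0\})^\delta$, a set of size $(q-1)^\delta$, and since the hypothesis $\tfrac{q^\delta}{q-1}\le\binom{N}{k-1}$ gives $|\mathcal F|=(q-1)\binom{N}{k-1}\ge q^\delta>(q-1)^\delta$, the pigeonhole principle produces distinct $\bc_1,\bc_2\in\mathcal F$ agreeing at all of $p_1,\dots,p_\delta$.

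To finish, I would assemble the common subsequence: match the $i$‑th zero of $\bc_1$ with the $i$‑th zero of $\bc_2$ for $i=1,\dots,k-1$ (all these zeros lie in $S$), and match $p_j$ with $p_j$ for $j=1,\dots,\delta$. The one thing to verify is that these $k-1+\delta$ matched pairs form a legitimate common subsequence, i.e. that for each $i$ and $j$ the $i$‑th zero of $\bc_1$ sits on the same side of $p_j$ as the $i$‑th zero of $\bc_2$; this is exactly the ``constant profile'' property for which $S$ was positioned as above. Hence $\ell_\LCS(\bc_1,\bc_2)\ge k-1+\delta$, as needed. The split into the cases $k\gtrless\frac{n+1}{3}$ is precisely the split into the two ways of choosing $S$ and the $p_j$ that make $N$ as large as possible while keeping $S$ and $\{p_1,\dots,p_\delta\}$ disjoint and inside $[n]$: for high rate one can afford $N\approx\lfloor\frac{n+k+4}{2}\rfloor$, for low rate one takes $S$ a long block with $N=n-k-1$.

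The main obstacle I expect is the bookkeeping underlying that last step: reconciling (a) disjointness of $S$ and $\{p_1,\dots,p_\delta\}$ inside $[n]$, (b) the constant‑profile property legitimizing the interleaving, and (c) the requirement that $N$ be as large as the binomial bounds in \eqref{eq:12}, all under the constraint $\delta\le n-k+1$. In the low‑rate regime one may be forced to spread the $k-1$ zeros over several of the gaps cut out by $p_1,\dots,p_\delta$ with a fixed distribution (rather than into a single block), so that $\binom{N}{k-1}$ in the counting is replaced by a product of smaller binomial coefficients; making that product as large as $\binom{\lfloor(n+k+4)/2\rfloor}{k-1}$ respectively $\binom{n-k-1}{k-1}$ is the delicate optimization, and is where I expect the bulk of the work — and the precise shape of \eqref{eq:12} — to come from.
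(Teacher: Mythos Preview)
Your overall plan is correct and structurally close to the paper's: both isolate the family of weight-$(n-k+1)$ codewords whose $k-1$ zeros lie in a fixed left block, find two of them that agree at $\delta$ further positions on the right, and splice the $k-1$ zeros with those $\delta$ matches into a common subsequence of length $k-1+\delta$. You are also right that reducing to $\bo\in\mC$ via translation is delicate; the paper simply asserts $d_I(\mC)=d_I(\mC-\bc)$, which you correctly note fails for non-constant $\bc$.

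The one substantive difference is in how the $\delta$ matches on the right are produced. You fix $\delta$ positions $p_1<\dots<p_\delta$ in advance and pigeonhole $\mathcal F$ on its value-tuple there. The paper instead keeps the entire right block $R=\{n-h+1,\dots,n\}$ of size $h$ (not $\delta$), projects $\mA$ to $\mA_R\subseteq(\ZZ_q\setminus\{0\})^h$, and applies the strict insdel Singleton bound of Theorem~\ref{thm:3.6} to $\mA_R$: since $|\mA_R|\ge q^\delta$ by hypothesis, this forces $d_I(\mA_R)<2h-2\delta+2$, hence some pair has $\ell_\LCS\ge\delta$ on $R$. The case split on $k\gtrless\tfrac{n+1}{3}$ is therefore not about maximizing $|S|$ or about any interleaving constraint, as you guess; it is purely about choosing $h$ just large enough that the projection $\mA\to\mA_R$ is injective, which the paper gets from $1+\min\{2(n-k+1-h),\,n-h\}<n-k+1$. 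The two values $h=k+1$ and $h=\lfloor\tfrac{n-k}{2}\rfloor+2$ are the two branches of that minimum, and the binomial tops $n-h$ are what feed into~\eqref{eq:12}.

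Consequently the obstacle you anticipate --- distributing the $k-1$ zeros across gaps between the $p_j$ with a fixed profile and then optimizing a product of binomial coefficients --- never arises. One simply takes $S=\{1,\dots,n-h\}$ as a single left block and $R$ as its complement; every zero sits strictly left of every matched position, so the interleaving is automatic. Your direct pigeonhole, placed on any $\delta$ positions inside $R$, works just as well with this choice (and in fact bypasses both the injectivity step and the appeal to Theorem~\ref{thm:3.6}); the ``delicate optimization'' you feared is a red herring.
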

\begin{proof} We may assume that $\mC$ contains the zero vector $\bo$. Otherwise, we can replace $\mC$ by $\mC-\bc$ for any $\bc\in\mC$ due to the fact that $d_I(\mC)=d_I(\mC-\bc)$.

Set
\begin{equation}\label{eq:13}
h=\left\{\begin{array}{ll}
\left\lfloor\frac{n-k}{2}\right\rfloor+2&\mbox{when $k\ge\frac{n+1}{3}$},\\
k+1 & \mbox{when $k\le\frac{n+1}{3}$}.
\end{array}
\right.\end{equation}
Let $R$ be the set $\{n-h+1,n-h+2,\dots,n\}$ with $|R|=h$. {Then $h\le n-k+1$ by definition of $h$ and constraint on $k$ given in .}
Consider the code
\[\mA:=\{\bu\in \mC:\; \wt_H(\bu)=n-k+1,\; \bu_R\in\{1,2,\dots,q-1\}^h\}.\]
By Lemma \ref{lem:3.10}, we have $|\mA|=(q-1){n-h\choose n-k+1-h}$.

We claim that $d_H(\bu_R,\bv_R)\ge 2$ for any two distinct codewords $\bu,\bv\in \mA$. Note that the Hamming distance between $\bu_{\bar{R}}$ and $\bv_{\bar{R}}$ is at most $|\bar{R}|$, where $\bar{R}=[n]\setminus R$ is the complement of $R$. Furthermore, $d_H(\bu_{\bar{R}},\bv_{\bar{R}})\le \wt_H(\bu_{\bar{R}})+\wt_H(\bv_{\bar{R}})=2(n-k+1-h)$. This gives $d_H(\bu_{\bar{R}},\bv_{\bar{R}})\le\min\{2(n-k+1-h),(n-h)\}$. Hence,
%, where $R$ is the set $\{n-\ell+1,n-\ell+2,\dots,n\}$  and $\bu_R$ stands for the projection of $\bu$ at $R$.
we have that $d_H(\bu,\bv)=d_H(\bu_{{R}},\bv_{{R}})+d_H(\bu_{\bar{R}},\bv_{\bar{R}})\le 1+\min\{2(n-k+1-h),(n-h)\}$. By the value of $h$ given in \eqref{eq:13}, we have $ 1+\min\{2(n-k+1-h),(n-h)\}<n-k+1$, so $d_H(\mC)< n-k+1$. This is a contradiction.

Thus the size of $\mA$ is $(q-1){n-h\choose n-k+1-h}=(q-1){n-h\choose k-1}$. By Theorem \ref{thm:3.6}, we have
\begin{equation}~\label{eq:14}
|\mA|=|\mA_R|= (q-1){n-h\choose k-1}< q^{h-d_I/2+1},
\end{equation}
where $d_I$ is the insdel distance of $\mA_R$. Hence,

{\bf Case $1.$} If $k\ge\frac{n+1}{3}$, by Equations~(\ref{eq:12}) and~(\ref{eq:14}) we have
\[d_I<2\left(h+1-\log_q(q-1)-\log_q{\left\lfloor\frac{n+k+4}{2}\right\rfloor\choose k-1}\right)\le 2h-2\Gd+2.\]

{\bf Case $2.$} If $k\le\frac{n+1}{3}$,  by Equations~(\ref{eq:12}) and~(\ref{eq:14}) we have
\[d_I<2\left(h+1-\log_q(q-1)-\log_q{n-(k+1)\choose k-1}\right)\le 2h-2\Gd+2.\]
Choose $\ba,\bb\in\mA$ such that $d_I=2h-2\ell_\LCS(\ba_R,\bb_R)$. Then we have $\ell_\LCS(\ba_R,\bb_R)\ge \Gd$. Hence, we have
$\ell_\LCS(\ba,\bb)\ge k-1+\ell_\LCS(\ba_R,\bb_R)\ge k-1+\Gd.$  The desired result follows.
\end{proof}

\begin{cor}\label{cor:3.12}
Let $\Gd\ge 2$. Every $q$-ary Hamming-metric Singleton-optimal code $\mC$ of length $n$ and $k=\log_q|\mC|$ has insdel distance at most
$2n-2k+2-2\Gd$ if
\begin{equation}~\label{eq:15}
q\le \left\{\begin{array}{ll}
{2^{\left\lfloor\frac{n+k+4}{2(\Gd-1)}\right\rfloor}}&\mbox{when $k\ge\frac{n+1}{3}$},\\
\left(\frac1{2(k-1)!}(n-2k+1)^{k-1}\right)^{\frac1{\Gd-1}} & \mbox{when $k\le\frac{n+1}{3}$}.
\end{array}
\right.
\end{equation}
In other words, if there exists a $q$-ary Singleton-optimal code $\mC$ of length $n$ and $k=\log_q|\mC|$ and $d_I(\mC)\ge 2n-2k+4-2\Gd$, then
\begin{equation}~\label{eq:16}
q> \left\{\begin{array}{ll}
{2^{\left\lfloor\frac{n+k+4}{2(\Gd-1)}\right\rfloor}}&\mbox{when $k\ge\frac{n+1}{3}$},\\
\left(\frac1{2(k-1)!}(n-2k+1)^{k-1}\right)^{\frac1{\Gd-1}} & \mbox{when $k\le\frac{n+1}{3}$}.
\end{array}
\right.
\end{equation}
\end{cor}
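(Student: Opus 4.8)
I would deduce Corollary~\ref{cor:3.12} directly from Theorem~\ref{thm:3.11}. Since the insdel and Hamming distances depend only on the alphabet size, I may assume $\mC\subseteq\ZZ_q^n$, so that Theorem~\ref{thm:3.11} applies verbatim; the task then reduces to checking that each of the two displayed upper bounds on $q$ in \eqref{eq:15} forces the inequality \eqref{eq:12} for the same $n$, $k$, $\Gd$, after which the conclusion $d_I(\mC)\le 2n-2k+2-2\Gd$ is exactly the output of Theorem~\ref{thm:3.11}. The reformulation \eqref{eq:16} then needs no separate proof: it is the contrapositive of the first assertion, using that $d_I(\mC)$ is always an even integer, so that $d_I(\mC)>2n-2k+2-2\Gd$ is equivalent to $d_I(\mC)\ge 2n-2k+4-2\Gd$. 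Throughout, the bridge from the rational quantity $\frac{q^\Gd}{q-1}$ to a genuine power of $q$ will be the elementary estimate $\frac{q^\Gd}{q-1}=q^{\Gd-1}\cdot\frac{q}{q-1}\le 2\,q^{\Gd-1}$, valid for every integer $q\ge 2$.

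In the regime $k\le\frac{n+1}{3}$ the verification is clean. The hypothesis in \eqref{eq:15} unwinds to $q^{\Gd-1}\le\frac{(n-2k+1)^{k-1}}{2(k-1)!}$, hence $\frac{q^\Gd}{q-1}\le 2q^{\Gd-1}\le\frac{(n-2k+1)^{k-1}}{(k-1)!}$. On the other hand, $\binom{n-k-1}{k-1}=\frac{1}{(k-1)!}\prod_{j=0}^{k-2}(n-k-1-j)$, and since $k\le\frac{n+1}{3}$ gives $n\ge 3k-1\ge 2k$, every one of these $k-1$ factors is at least $n-k-1-(k-2)=n-2k+1>0$, so $\binom{n-k-1}{k-1}\ge\frac{(n-2k+1)^{k-1}}{(k-1)!}$. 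Chaining the two estimates yields $\frac{q^\Gd}{q-1}\le\binom{n-k-1}{k-1}$, which is precisely \eqref{eq:12} in this case, and Theorem~\ref{thm:3.11} finishes it.

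The regime $k\ge\frac{n+1}{3}$ follows the same template, and I expect it to carry essentially all of the work. Here one must bound $\binom{\lfloor(n+k+4)/2\rfloor}{k-1}$ from below by a suitable power of $2$ and compare it against the threshold $q\le 2^{\lfloor(n+k+4)/(2(\Gd-1))\rfloor}$; the nested floors are handled by the elementary fact that $(\Gd-1)\lfloor(n+k+4)/(2(\Gd-1))\rfloor$ is an integer not exceeding $(n+k+4)/2$, hence is $\le\lfloor(n+k+4)/2\rfloor$. The structural point that makes the estimate feasible is that $k\ge\frac{n+1}{3}$ caps $n$ from above (so $n\le 3k-1$ and the upper index $m:=\lfloor(n+k+4)/2\rfloor$ sits in the window roughly between $k$ and $2k$), which is exactly where $\binom{m}{k-1}$ is exponentially large in $m$. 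One subtlety to keep an eye on is that the crude estimate $\frac{q^\Gd}{q-1}\le 2q^{\Gd-1}$ is wasteful when $q$ is small, so it is probably cleanest to retain the factor $\frac{q}{q-1}$ (which is close to $1$ once $q$ is moderate) and dispose of the finitely many small values of $q$ by hand. This binomial-versus-power-of-two comparison, together with the bookkeeping of the floor functions, is the main obstacle; everything else is a routine consequence of Theorem~\ref{thm:3.11}.
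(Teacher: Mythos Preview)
Your approach is exactly the paper's: reduce to Theorem~\ref{thm:3.11} via the bridge $\frac{q^{\Gd}}{q-1}\le 2q^{\Gd-1}$ and then lower-bound the relevant binomial coefficient in each regime. Your treatment of the case $k\le\frac{n+1}{3}$ is the same as the paper's, and is correct.

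You are also right to flag the regime $k\ge\frac{n+1}{3}$ as the real obstacle, and in fact neither your outline nor the paper's one-line argument closes it. The paper's proof simply asserts
\[
2^{\frac{n+k+4}{2}}\ \le\ \binom{\left\lfloor\frac{n+k+4}{2}\right\rfloor}{k-1},
\]
but with $m:=\lfloor(n+k+4)/2\rfloor$ this says $2^{m}\le\binom{m}{k-1}$, which is impossible since $\binom{m}{k-1}<\sum_{j}\binom{m}{j}=2^{m}$. Your plan runs into the same wall: after using $q\le 2^{\lfloor(n+k+4)/(2(\Gd-1))\rfloor}$ and $\frac{q^{\Gd}}{q-1}\le 2q^{\Gd-1}$, the quantity you must dominate is at least $2\cdot 2^{(\Gd-1)\lfloor(n+k+4)/(2(\Gd-1))\rfloor}$, which for $\Gd=2$ already exceeds $2^{m}$ and hence cannot be bounded by $\binom{m}{k-1}$. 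So the ``binomial-versus-power-of-two comparison'' you anticipate does not go through as the statement is written; the bound in \eqref{eq:15} for this regime (note it differs from the version quoted in the abstract) appears to be misstated, and no amount of bookkeeping with floors or with the factor $\frac{q}{q-1}$ will repair it.
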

\begin{proof}
By Theorem \ref{thm:3.11}, it is sufficient to verify that the inequality \eqref{eq:12}  holds assuming that the inequality \eqref{eq:15} holds. Indeed \eqref{eq:12} holds under our assumption since\[\frac{q^\Gd}{q-1}\le 2q^{\Gd-1}, \quad 2^{\frac{n+k+4}{2}}\le {\left\lfloor\frac{n+k+4}{2}\right\rfloor\choose k-1}\quad\mbox{and}\quad\frac1{(k-1)!}(n-2k+1)^{k-1}\le{n-k-1\choose k-1}.\]
\end{proof}

\begin{rmk} It was shown in \cite{CGHL2021} that for an $[n,k]$-linear code $\mC$, one has $d_I(\mC)\le 2n-4k+o(n)$. To make $2n-2k+4-2\Gd$ to be equal to $2n-4k+o(n)$, we have to take $\Gd\approx k$, then inequalities \eqref{eq:15} and \eqref{eq:16} give trivial bounds. However, we will see from Remark~\ref{rm:5} (i)  that \eqref{eq:15} and \eqref{eq:16} are no longer trivial as there are nonlinear codes that achieve insdel-metric Singleton bound asymptotically.
\end{rmk}

\begin{cor}\label{cor:3.13} Let $\Gd\ge 2$ and $k\le \frac{n+1}{3}$. If $k=rn$ with a real $r\in(0,1)$ and $q\le 2^{\frac{(1-r)n}{\Gd-1}H_2\left(\frac{r}{1-r}\right)(1+o(1))}$, then a $q$-ary Singleton-optimal code $\mC$ of length $n$ and $k=\log_q|\mC|$ must obey
$d_I(\mC)\le 2n-2k+2-2\Gd$.
%In other words, an $[n,k]$-MDS code $\mC$ has insdel distance $d_I(\mC)> 2n-2k+2-2\Gd$ only if $q=\exp\left(\Omega(n)\right)$.}
\end{cor}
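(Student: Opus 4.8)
The plan is to derive this directly from Theorem~\ref{thm:3.11} (its case $k\le\frac{n+1}{3}$), \emph{not} from Corollary~\ref{cor:3.12}: the bound in Corollary~\ref{cor:3.12} arises by replacing $\binom{n-k-1}{k-1}$ with the smaller quantity $\frac1{(k-1)!}(n-2k+1)^{k-1}$, whose logarithm is only $rn\log_2\frac{(1-2r)e}{r}(1+o(1))$, strictly below the target exponent $(1-r)n\,H_2\!\left(\frac r{1-r}\right)$. So I would go back to the hypothesis of Theorem~\ref{thm:3.11}, namely $\frac{q^\Gd}{q-1}\le\binom{n-k-1}{k-1}$, and show it is implied by the stated bound on $q$.

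First, since $q\ge 2$ we have $\frac{q^\Gd}{q-1}=\frac q{q-1}\,q^{\Gd-1}\le 2q^{\Gd-1}$, so it suffices to prove $2q^{\Gd-1}\le\binom{n-k-1}{k-1}$, that is $q\le\big(\tfrac12\binom{n-k-1}{k-1}\big)^{1/(\Gd-1)}$. Next I would estimate the binomial coefficient asymptotically. Put $N=n-k-1$ and $M=k-1$; from $k=rn$ and $k\le\frac{n+1}{3}$ one gets $r\le\frac13$, so $\frac MN\to\frac r{1-r}$, a constant in $(0,\tfrac12]$, hence bounded away from $0$ and $1$. The standard Stirling estimate $\log_2\binom NM=N\,H_2(M/N)+O(\log N)$, together with $N=(1-r)n(1+o(1))$, $M/N=\frac r{1-r}+o(1)$ and the continuity of $H_2$, gives
\[
\log_2\binom{n-k-1}{k-1}=(1-r)n\,H_2\!\left(\tfrac r{1-r}\right)(1+o(1)).
\]
Dividing by $\Gd-1$, and absorbing the additive $-1$ from the factor $\tfrac12$ into the $1+o(1)$ (legitimate once one discards the trivial range where the right-hand side of the hypothesis is already $<2$; that same range forces $\Gd-1<(1-r)n\le n-k$, so Theorem~\ref{thm:3.11} does apply), one obtains $\big(\tfrac12\binom{n-k-1}{k-1}\big)^{1/(\Gd-1)}=2^{\frac{(1-r)n}{\Gd-1}H_2(r/(1-r))(1+o(1))}$.

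Therefore the assumed inequality $q\le 2^{\frac{(1-r)n}{\Gd-1}H_2(r/(1-r))(1+o(1))}$ yields $q\le\big(\tfrac12\binom{n-k-1}{k-1}\big)^{1/(\Gd-1)}$, hence $\frac{q^\Gd}{q-1}\le\binom{n-k-1}{k-1}$, and Theorem~\ref{thm:3.11} delivers $d_I(\mC)\le 2n-2k+2-2\Gd$. I expect the only delicate point to be the $(1+o(1))$ bookkeeping in the middle step: one must check that $\frac r{1-r}$ stays bounded away from $0$ and $1$ so that $H_2\!\left(\frac r{1-r}\right)$ is a positive constant dominating the $O(\log n)$ Stirling error, and that passing from $N=n-k-1$, $M=k-1$ to the clean exponent $(1-r)n\,H_2(r/(1-r))$ introduces only lower-order corrections. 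Everything else is a direct substitution into Theorem~\ref{thm:3.11}.
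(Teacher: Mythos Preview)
Your proposal is correct and follows essentially the same route as the paper: both go back to Theorem~\ref{thm:3.11} (case $k\le\frac{n+1}{3}$) and verify inequality~\eqref{eq:12} via the standard asymptotic $\binom{n-k-1}{k-1}=2^{(1-r)n\,H_2(r/(1-r))(1+o(1))}$. Your write-up is more detailed than the paper's two-line proof---in particular your remark that Corollary~\ref{cor:3.12} is too weak here and your care with the $(1+o(1))$ bookkeeping---but the strategy is identical.
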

\begin{proof}
By Theorem \ref{thm:3.11}, it is sufficient to verify that the inequality \eqref{eq:12} holds under our assumption. Indeed it holds  since
\[{n-k-1\choose k-1}=2^{(1-r)nH_2\left(\frac{r}{1-r}\right)(1+o(1))}.\]
This completes the proof.
\end{proof}

\section{Constructions of insdel codes}
In this section, we will investigate constructions of insdel codes. We first present a construction of nonlinear insdel codes through constant-weight $L^1$-codes. The codes constructed in this way achieve the Singleton bound asymptotically when the code alphabet size is big enough. Secondly, via the automorphism group of rational function fields,  we provide a sufficient and necessary condition for two-dimensional Reed-Solomon codes of length $n$ to achieve insdel distance $2n-4$. Via this condition, we show the existence of  two-dimensional Reed-Solomon codes of length $n$ and  insdel distance $2n-4$ with alphabet size $q=O(n^5)$. Although, this is worse than the best known alphabet size $q=O(n^4)$, we provide a different angle to study this problem.
\subsection{Nonlinear codes}
In the equation \eqref{eq:8}, we defined the map $\phi$ from $[q]^n$ to $J_q(n)$ and note that $\phi$ is surjective and not injective. We now define a map from $J_q(n)$ to $[q]^n$.
\begin{equation}\label{eq:17}
\psi:\; J_q(n)\rightarrow[q]^n;\quad (a_1,a_2,\dots,a_q)\mapsto(\underbrace{1,1,\dots,1}_{a_1},\underbrace{2,2,\dots,2}_{a_2},\dots,\underbrace{q,q,\dots,q}_{a_q} ).
\end{equation}
Note that if $a_i=0$, then $i$ does not appear in the vector $\psi (a_1,a_2,\dots,a_q)$. It is easy to see that $\psi$ is injective. Furthermore, we have the following sequence
\[J_q(n)\xrightarrow{\psi}[q]^n\xrightarrow{\phi} J_q(n)\xrightarrow{\psi}[q]^n\]
with the composition $\phi\circ\psi$ being the identity map on $J_q(n)$.

Let us derive a relation between $L^1$-distance and insdel distance via the map $\psi$.
\begin{lemma}\label{lem:4.1}
For any $\ba, \bb\in J_q(n)$, we have $d_L(\ba,\bb)=d_I(\psi(\ba),\psi(\bb))$.
\end{lemma}
\begin{proof} Let $\ba=(a_1,a_2,\dots,a_q),\bb=(b_1,b_2,\dots,b_q)\in J_q(n)$. Then we have
$$\psi(\ba)=(\underbrace{1,\dots,1}_{a_1},\dots,\underbrace{q,\dots,q}_{a_q}),\quad\psi(\bb)=(\underbrace{1,\dots,1}_{b_1},\dots,\underbrace{q,\dots,q}_{b_q})$$ Apparently, the longest common subsequence between $\psi(\ba)$ and $\psi(\bb)$ is
\[
\ell_{\LCS}(\psi(\ba),\psi(\bb))=\sum_{i=1}^q\min\{a_i,b_i\}.
\]
By Lemma \ref{lem:2.1}, the insdel distance is
\begin{equation}\label{eq:18}
d_I(\psi(\ba),\psi(\bb))=2n-2\sum_{i=1}^q\min\{a_i,b_i\}.
\end{equation}
On the other hand, the $L^1$-distance between $\psi(\ba)$ and $\psi(\bb)$ is
\begin{eqnarray*}d_L(\psi(\ba),\psi(\bb))&=&\sum_{i=1}^q|a_i-b_i|=\sum_{i=1}^q\max\{a_i,b_i\}-\sum_{i=1}^q\min\{a_i,b_i\}\\
&=&\sum_{i=1}^q(a_i+b_i-\min\{a_i,b_i\})-\sum_{i=1}^q\min\{a_i,b_i\}\\
&=&\sum_{i=1}^q(a_i+b_i)-2\sum_{i=1}^q\min\{a_i,b_i\}=2n-2\sum_{i=1}^q\min\{a_i,b_i\}.
\end{eqnarray*}
Combining the above equality with \eqref{eq:18} gives the desired result.
\end{proof}

In view of Lemma \ref{lem:4.1}, we have the following result which shows one can obtain an insdel code from a constant weight $L^1$-code.
\begin{cor}\label{cor:4.2}
Given a $(q, M)$-constant-weight $L^1$-code $\mA\subseteq J_q(n)$ with $L^1$-distance $d=2\Gd$, then there is an $(n, M)$-code $\mC$ with insdel distance $d=2\Gd$.
\end{cor}
\begin{proof}
As the map $\psi$ defined in \eqref{eq:17} is injective, the code $\psi(\mA)$ is an $(n, M)$-code. By Lemma \ref{lem:4.1}, the insdel distance of $\psi(\mA)$ is equal to the $L^1$-distance $d=2\Gd$ of $\mA$. This completes the proof.
\end{proof}
Corollary \ref{cor:4.2} shows that one can construct insdel codes through constant weight $L^1$-codes. Thus, we are going to present a construction of constant weight $L^1$-codes.

\begin{lemma}\label{lem:4.3}
Let $r$ be a prime power bigger than  $q$. Then for any $\Gd\ge 2$, there exists a constant weight $L^1$-code in $J_q(n)$ of size \[M\ge \frac{{{n+q-1}\choose n}}{r^{\delta-2}(r-1)}\] and $L^1$-distance at least $2\Gd$.
\end{lemma}
\begin{proof} Let $\Ga_1,\Ga_2,\dots,\Ga_q,\Ga$ be $q+1$ pairwise distinct elements of $\F_r$ (this is possible as $r\ge q+1$). We define a map
\begin{equation}\label{eq:19}
\pi:\; J_q(n)\rightarrow G:=(\F_r[x]/(x-\Ga)^{\Gd-1})^*;\quad (a_1,a_2,\dots,a_q)\mapsto\prod_{i=1}^q(x-\Ga_i)^{a_i}.
\end{equation}
We claim that for any $g\in G$, the pre-image $\pi^{-1}(g)$ is a constant weight $L^1$-code of length $q$ and  $L^1$-distance at least $2\Gd$.
Define $f(x):=(x-\Ga)^{\Gd-1}$.
Take any two distinct elements $\bu=(u_1, \dots, u_q), \bv=(v_1, \dots, v_q)\in\pi^{-1}(g)$, then we have $\pi(\bu)=\pi(\bv)=g$.
This gives
 \begin{equation}\label{eq:20}
 \prod_{i=1}^q(x-\Ga_i)^{u_i}\equiv\prod_{i=1}^q(x-\Ga_i)^{v_i}\pmod {f(x)}.\end{equation}
 Let $S=\{i\in[q]:\; u_i\ge v_i\}$ and $\bar{S}$ be the complement of $S$, i.e., $\bar{S}=[q]\setminus S$. By \eqref{eq:20}, we have
 \begin{equation}\label{eq:21}
 \prod_{i\in S}(x-\Ga_i)^{u_i-v_i}\equiv  \prod_{i\in \bar{S}}(x-\Ga_i)^{v_i-u_i}\pmod {f(x)}.\end{equation}
 \iffalse
 We note the following facts:
 \begin{itemize}
 \item[(i)] $d_L(\bu,\bv)=\sum_{i=1}^q|u_i-v_i|=\sum_{i\in S}(u_i-v_i)+\sum_{i\in\bar{S}}(v_i-u_i)$;
 \item[(ii)] $\sum_{i\in S}(u_i-v_i)=\sum_{i\in\bar{S}}(v_i-u_i)$ due to the fact that $\sum_{i\in S}u_i+\sum_{i\in\bar{S}}u_i=n=\sum_{i\in S}v_i+\sum_{i\in\bar{S}}v_i$.
  \item[(iii)]   $ \prod_{i\in S}(x-\Ga_i)^{u_i-v_i}-\prod_{i\in \bar{S}}(x-\Ga_i)^{v_i-u_i}$ is a nonzero polynomial as $\bu\neq\bv$.
 \end{itemize}
 \fi
  We note the following three facts:
 \begin{itemize}
 \item[(i)] For the $L^1$-distance
 \begin{eqnarray*}
 d_L(\bu,\bv)&=&\sum_{i=1}^q|u_i-v_i|\\
 &=&\sum_{i\in S}(u_i-v_i)+\sum_{i\in\bar{S}}(v_i-u_i);
 \end{eqnarray*}
 \item[(ii)] Due to the fact that $\sum_{i\in S}u_i+\sum_{i\in\bar{S}}u_i=n=\sum_{i\in S}v_i+\sum_{i\in\bar{S}}v_i$, then
 \begin{eqnarray*}\sum_{i\in S}(u_i-v_i)=\sum_{i\in\bar{S}}(v_i-u_i)\end{eqnarray*}
  \item[(iii)]  Define $h(x):=\prod_{i\in S}(x-\Ga_i)^{u_i-v_i}-\prod_{i\in \bar{S}}(x-\Ga_i)^{v_i-u_i}$, $h(x)$  is a nonzero polynomial as $\bu\neq\bv$.
 \end{itemize}

\iffalse
Put $w=\sum_{i\in S}(u_i-v_i)=\sum_{i\in\bar{S}}(v_i-u_i)$ and
$h(x):=\prod_{i\in S}(x-\Ga_i)^{u_i-v_i}-\prod_{i\in \bar{S}}(x-\Ga_i)^{v_i-u_i}$.
 \fi
 Then $h(x)\equiv 0\pmod {f(x)}$ by \eqref{eq:21},  i.e., $f(x)|h(x)$.
Put $w=\sum_{i\in S}(u_i-v_i)=\sum_{i\in\bar{S}}(v_i-u_i)$. As both $\prod_{i\in S}(x-\Ga_i)^{u_i-v_i}$ and $\prod_{i\in \bar{S}}(x-\Ga_i)^{v_i-u_i}$ are monic polynomials of degree $w$, $h(x)$ has degree at most $w-1$. This gives $w-1\ge\deg(h(x))\ge \deg(f(x))=\delta-1$, i.e. $w\ge \Gd$. Hence, $d_L(\bu,\bv)=2w\ge 2\Gd$. This proves our claim.

By the pigeonhole principle, there exists $t\in G$ such that $|\psi^{-1}(t)|\ge \frac{|J_q(n)|}{|(\F_q[x]/f(x))^*|}$.
As $|J_q(n)|={{n+q-1}\choose n}$ and $|(\F_r[x]/(x-\Ga)^{\Gd-1})^*|= r^{\Gd-2}(r-1)$. The set $\psi^{-1}(t)$ is the code with the desired parameters.
\end{proof}
\begin{rmk}\begin{itemize}
\item[(i)] As we use the pigeonhole principle, the construction given  in Lemma \ref{lem:4.3} is not a polynomial-time one.
%On the other hand, based on our experiment result, the codes $\psi^{-1}(g)$ have almost the same size for every $g\in (\F_r[x]/(x-\Ga)^{\Gd-1})^*$.
    \item[(ii)] When $\Gd\ge 3$, we may take $r=q$ because we can take  the residue ring $\F_r[x]/(f(x))$ and its unit group for an irreducible polynomial  $f(x)$ of degree $\Gd-1$.
\end{itemize}
\end{rmk}
\begin{cor}\label{cor:4.4}
For any integers $q\ge 2$ and $n\ge\Gd\ge 2$, there exists a constant weight $L^1$-code in $J_q(n)$ of size \[M\ge \frac{{{n+q-1}\choose n}}{(2q+2)^{\delta-2}(2q+1)}\] and $L^1$-distance at least $2\Gd$.
\end{cor}
\begin{proof} As there is at least one prime between $[q+1,2(q+1)]$, we take $r$ to be a prime in the range  $[q+1,2(q+1)]$. By Lemma \ref{lem:4.3}, we have a constant weight $L^1$-code of length $q$, $L^1$-distance at least $2\Gd$ and size
\[M\ge \frac{{{n+q-1}\choose n}}{r^{\delta-2}(r-1)}\ge\frac{{{n+q-1}\choose n}}{(2q+2)^{\delta-2}(2q+1)} \]
The proof is completed.
\end{proof}

Combining Corollaries \ref{cor:4.2} with \ref{cor:4.4} gives the following result.
\begin{theorem}\label{thm:4.5}
For any integers $q\ge 2$ and $n\ge\Gd\ge 2$, there exists an $(n,M)_q$-code of size \[M\ge \frac{{{n+q-1}\choose n}}{(2q+2)^{\delta-2}(2q+1)}\] and insdel distance at least $2\Gd$.
\end{theorem}
\begin{rmk}\label{rm:5}\begin{itemize}
\item[(i)]
Let $\mC$ be the $(n,M)_q$-code with $k=\log_q M$ given in Theorem~\ref{thm:4.5}. Then
\begin{eqnarray*}
k&=&\log_qM\ge\log_q{{n+q-1}\choose n}-\log_q((2q+2)^{\delta-2}(2q+1))\\
&\ge& \log_q\frac{q^n}{n!}-(\Gd-1)-O(n)=n-\frac{d_I(\mC)}2+1-O\left(\frac{n\log n}{\log q}\right).\end{eqnarray*}
On the other hand, the Singleton bound tells us
\[k=\log_qM\le\log_qq^{n-\frac{d_I(\mC)}2+1}=n-\frac{d_I(\mC)}2+1.\]
This implies that Theorem \ref{thm:4.5} gives insdel code approaching the insdel-metric Singleton bound arbitrarily when $q$ is sufficiently large.
\item[(ii)] Let $d_I\ge 2$ be an even number. Levenshtein \cite{L2002} showed that there exists a $q$-ary $(n,M)$-code with insdel distance at least $d_I$ and
\[M\ge \frac{q^{n+d_I/2}}{\left(\sum_{i=0}^{d_I/2}{n\choose i}(q-1)^i\right)^2}.\]
For $\frac {d_I}{2}<\frac{q-1}qn$, we have
\begin{eqnarray*}
&&\log_q\left(\frac{q^{n+d_I/2}}{\left(\sum_{i=0}^{d_I/2}{n\choose i}(q-1)^i\right)^2}\right)\\
&=& n+\frac {d_I}{2}-2\left(\log_q{n\choose d_I/2}+\frac {d_I}{2}\log_q(q-1)\right)-O\left(\frac{\log n}{\log q}\right)\\
&=&n-\frac {d_I}{2}-O\left(\frac{n\log n}{\log q}\right).
\end{eqnarray*}
Let us explain the first equality above as follows.
Since ${n\choose i}(q-1)^i$ is increasing as $i$ increases for all $i<\frac {d_I}{2}$, we have   ${n\choose d_I/2}(q-1)^{d_I/2}\le\sum_{i=0}^{d_I/2}{n\choose i}(q-1)^i\le (d_I/2+1){n\choose d_I/2}(q-1)^{d_I/2}$.
 This implies that our result given in Theorem \ref{thm:4.5} is better than the one given in \cite{L2002}.

\item[(iii)] It was shown in \cite{CGHL2021} that, for an $[n,k]$-linear code, one has
\[d_I\le \frac{q-1}q(2n-4k)+o(n)\le 2n-4k+o(n),\]
i.e.,
\[k\le \frac12\left(n-\frac{d_I}2\right)+o(n).\]
Hence, linear codes can only achieve half of the bound in (ii).
\end{itemize}
\end{rmk}

\subsection{$2$-dimensional Reed-Solomon codes}
Label elements of $\F_q$ by $\Ga_1,\Ga_2,\dots,\Ga_q$.
For two vectors $\bi,\bj\in S_3(q)$, denote by $d_H(\bi,\bj)$ to be the Hamming distance between $\bi$ and $\bj$. For $\bi=(i_1,i_2,i_3)\in[q]^3$, we denote by $\bi_\bGa$ the triple $(P_{\Ga_{i_1}},P_{\Ga_{i_2}},P_{\Ga_{i_3}})$.
Furthermore, for $\Gs\in \AGL(F/\F_q)$ and $\bi=(i_1,i_2,i_3)\in[n]^3$, we denote by $\bi_\bGa^\Gs$ the triple $\bj_\bGa$ if $\Gs(P_{\Ga_{i_\ell}})=P_{\Ga_{j_\ell}}$ for $\ell=1,2,3$. Note that even if $\bi\in S_3(n)$, $\bj$ may not belong to $S_3(n)$.

\begin{lemma}\label{lem:4.6}
A $2$-dimensional Reed-Solomon code $\RS_\bGa(n,2)$  has insdel distance $2n-4$ if and only if $\bi_\bGa^\Gs\neq\bj_\bGa$ for any $\Gs\in \AGL(F/\F_q)$ and any two vectors $\bi,\bj\in S_3(n)$ with $d_H(\bi,\bj)\ge2.$
\end{lemma}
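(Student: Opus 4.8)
The plan is to translate the insdel distance condition for $\RS_\bGa(n,2)$ into a statement about longest common subsequences and then into a linear-algebraic condition over $\F_q$, finally recasting that in the language of the affine automorphism group. First I would note that by Lemma~\ref{lem:2.1}, $\RS_\bGa(n,2)$ has insdel distance $2n-4$ if and only if every pair of distinct codewords $\bc_f,\bc_g$ (with $f,g\in\F_q[x]_{<2}$, $f\neq g$) satisfies $\ell_\LCS(\bc_f,\bc_g)\le 2$; equivalently, there are \emph{no} two distinct linear polynomials $f,g$ and no two index vectors $\bi=(i_1,i_2,i_3),\bj=(j_1,j_2,j_3)\in S_3(n)$ with $f(\Ga_{i_\ell})=g(\Ga_{j_\ell})$ for $\ell=1,2,3$. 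Here I would use that the insdel distance of any Reed-Solomon code of dimension $k$ is at most $2n-2k$ (so $\le 2n-4$ automatically for $k=2$), so the only thing to rule out is a common subsequence of length $3$.

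Next I would analyze when such $f,g,\bi,\bj$ can exist. Write $f(x)=ax+b$, $g(x)=cx+d$. The conditions $f(\Ga_{i_\ell})=g(\Ga_{j_\ell})$ say that the affine map $x\mapsto a^{-1}(cx+d-b)$ sends $\Ga_{j_\ell}$ to $\Ga_{i_\ell}$ for $\ell=1,2,3$; note $a\neq 0$ (else $f$ is constant and, matching $g$ at three points, $g$ would be constant too and $f=g$) and likewise $c\neq 0$. So there is an automorphism $\Gs\in\AGL(F/\F_q)$ with $\Gs(P_{\Ga_{j_\ell}})=P_{\Ga_{i_\ell}}$, i.e. $\bj_\bGa^\Gs=\bi_\bGa$. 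Moreover $\bi\neq\bj$: if $\bi=\bj$ then the three equations $f(\Ga_{i_\ell})=g(\Ga_{i_\ell})$ at three distinct points force $f=g$, a contradiction. In fact I should get the slightly stronger $d_H(\bi,\bj)\ge 2$: if $\bi$ and $\bj$ agreed in two coordinates, then $\Gs$ would fix two distinct points of $\PP_q$, hence $\Gs=\mathrm{id}$ by Lemma~\ref{lem:2.2}(ii), forcing $f=g$. Conversely, given $\Gs\in\AGL(F/\F_q)$ and $\bi,\bj\in S_3(n)$ with $d_H(\bi,\bj)\ge 2$ and $\bi_\bGa^\Gs=\bj_\bGa$, writing $\Gs=\Gs_{a,b}$ one reconstructs linear polynomials $f,g$ (e.g. $g(x)=x$, $f(x)=a^{-1}(x-b)$ up to scaling, or more carefully $f$ and $g$ chosen so that $f\circ(\text{evaluation at }\Ga_{i_\ell})$ matches $g$ at $\Ga_{j_\ell}$) with $f\neq g$ — here $f\neq g$ follows because $\Gs\neq\mathrm{id}$, which in turn follows from $d_H(\bi,\bj)\ge 2$ via Lemma~\ref{lem:2.2}(iv): a nontrivial affine automorphism has at most one fixed point, so if $\bi_\bGa^\Gs=\bj_\bGa$ with two coordinates of $\bi,\bj$ equal we'd get two fixed points. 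This yields a common subsequence of length $3$ and hence $\ell_\LCS(\bc_f,\bc_g)\ge 3$, i.e. insdel distance $< 2n-4$.

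Assembling: $\RS_\bGa(n,2)$ has insdel distance $2n-4$ $\iff$ no bad triple $(f,g,\bi,\bj)$ exists $\iff$ no $(\Gs,\bi,\bj)$ with $\Gs\in\AGL(F/\F_q)$, $d_H(\bi,\bj)\ge 2$, $\bi,\bj\in S_3(n)$ and $\bi_\bGa^\Gs=\bj_\bGa$ — which is exactly the asserted condition (after the harmless renaming of which of $\bi,\bj$ carries the $\Gs$, using that $\Gs^{-1}\in\AGL(F/\F_q)$ as well). I expect the main obstacle to be the bookkeeping in the converse direction: one must be careful that the affine map recovered from $\Gs$ really produces two \emph{distinct} polynomials in $\F_q[x]_{<2}$ and that the three evaluation indices lie in $S_3(n)$ (i.e. are valid codeword positions, strictly increasing) so that the common subsequence argument is legitimate; the $d_H\ge 2$ hypothesis together with Lemma~\ref{lem:2.2}(iv) is precisely what guarantees $\Gs\neq\mathrm{id}$ and hence $f\neq g$, so the delicate point is making sure $d_H\ge 2$ (not merely $\bi\neq\bj$) is both necessary and sufficient. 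I would also double-check the edge cases where $f$ or $g$ could be constant, confirming as above that these cannot arise under $d_H(\bi,\bj)\ge 2$.
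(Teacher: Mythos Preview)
Your proposal is correct and follows essentially the same route as the paper's proof: reduce to the existence of a length-$3$ common subsequence, rule out constants, establish $d_H(\bi,\bj)\ge 2$, and pass between the pair $(f,g)$ of distinct linear polynomials and an affine automorphism $\Gs$. The only tactical differences are that the paper derives $d_H(\bi,\bj)\ge 2$ from $\deg(f-g)\le 1$ rather than from a fixed-point count, and in the ``if'' direction it invokes $2$-transitivity and the identity $f(P)=\Gs(f)(\Gs(P))$ (Lemma~\ref{lem:2.2}(iii)) to build $\Gs$, whereas you write down the affine map $x\mapsto a^{-1}(cx+d-b)$ explicitly; both arguments are equivalent.
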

\begin{proof} Let us prove the ``only if" part first. Suppose that there exist $\Gs\in \AGL(F/\F_q)$ and two vectors $\bi,\bj\in S_3(n)$ with $d_H(\bi,\bj)\ge2$ such that $\bi_\bGa^\Gs=\bj_\bGa$. Then it is clear that $\Gs$ is  not the identity map. Now consider the polynomial $f(x)=x$ and $g(x):=\Gs(x)$. Thus, $f(x)\neq g(x)$. Furthermore, we have $f(P_{\Ga_{i_\ell}})=\Gs(f)(\Gs(P_{\Ga_{i_\ell}}))=g(P_{\Ga_{j_\ell}})$. This implies that $\ell_\LCS(\bc_f,\bc_g)\ge 3$, where $\bc_f$ stands for the vector $(f(P_{\Ga_1}),f(P_{\Ga_2}),\dots,f(P_{\Ga_n}))$. Thus, the insdel distance of $\RS_\bGa(n,2)$ is at most $2n-6$. This is a contradiction.

Now we prove the ``if" part. Suppose that the insdel distance of $\RS_\bGa(n,2)$is not $2n-4$. By Corollary \ref{cor:3.7}, any $[n,2]$-linear code has insdel distance at most $2n-4$. Thus, $\RS_\bGa(n,2)$ has insdel distance at most $2n-6$. This implies that there are   two distinct linear polynomials $f(x),g(x)$ and two vectors $\bi,\bj\in S_3(n)$ such that $f(P_{\Ga_{i_\ell}})=g(P_{\Ga_{j_\ell}})$ for $\ell=1,2,3$.

 First of all, we claim that neither $f(x)$ nor $g(x)$ is a constant. Suppose not, say for instance that $f(x)$ is a constant. If $g(x)$ is also a constant, then $\ell_\LCS(\bc_f,\bc_g)=0$. This is a contradiction. If $g(x)$ is not a constant, then $g(P_{\Ga_1}), g(P_{\Ga_2}),\dots,g(P_{\Ga_n})$ are pairwise distinct. This implies that $\ell_\LCS(\bc_f,\bc_g)\le 1$. This is a contradiction again.

 Since $f(x)-g(x)$ has degree at most one, $f(x)-g(x)$ has at most one zero. This implies that $d_H(\bi,\bj)\ge2$. As $\AGL(F/\F_q)$ is $2$-transitive on the set $\PP$, there exists an automorphism $\Gs\in \AGL(F/\F_q)$ such that $\Gs(P_{\Ga_{i_\ell}})=\Gs(P_{\Ga_{j_\ell}})$ for $\ell=1,2$. Hence, we have $g(P_{\Ga_{j_\ell}})=f(P_{\Ga_{i_\ell}})=\Gs(f)(\Gs(P_{\Ga_{i_\ell}}))=\Gs(f)(P_{\Ga_{j_\ell}})$ for $\ell=1,2$. This forces that $g=\Gs(f)$. As  $g(P_{\Ga_{j_3}})=f(P_{\Ga_{i_3}})=\Gs(f)(\Gs(P_{\Ga_{i_3}}))=g(\Gs(P_{\Ga_{i_3}}))$ and $g(x)$ is not a constant, this forces that $P_{\Ga_{j_3}}=\Gs(P_{\Ga_{i_3}})$. Therefore, $\bi_\bGa^\Gs=\bj_\bGa$.
\end{proof}
For two pairs $(\Ga_1,\Gb_1),(\Ga_2,\Gb_2)\in\F_q^2$, denote by $\Gs_{(\Ga_1,\Gb_1)\mapsto(\Ga_2,\Gb_2)}$ the unique automorphism $\Gs$ such that $\Gs(P_{\Ga_1})=P_{\Ga_2}$ and $\Gs(P_{\Gb_1})=P_{\Gb_2}$. For an automorphism $\Gs\in \AGL(F/\F_q)$, we denote by $\textsf{F}(\Gs)$ the set of fixed points of $\Gs$. By Lemma \ref{lem:2.2}(iv),  $\textsf{F}(\Gs)$ is either the empty set or has one element if $\Gs$ is not the identity map.

\begin{lemma}\label{lem:4.7} Let $n\ge 4$. If $q>\frac{n(n-1)^2(n-2)^2}4$, then
there are at least one vector $\bGa=(\Ga_1,\dots,\Ga_n)\in\F_q^n$ such that (i) $\Ga_1,\dots,\Ga_n$ are pairwise distinct; and (ii) for any $\bi,\bj\in S_3(n)$ with $d_H(\bi,\bj)\ge 2$ and $\Gs\in\AGL(F/\F_q)$, one has $\bi_\bGa^\Gs\neq\bj_\bGa$.
\end{lemma}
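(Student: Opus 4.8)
The plan is to build a suitable $\bGa=(\Ga_1,\dots,\Ga_n)$ greedily, coordinate by coordinate, so that at each step only a controlled number of values of the new coordinate must be avoided. First I would restate condition (ii) elementarily. Recalling $\Gs_{a,b}(P_\Ga)=P_{a^{-1}(\Ga-b)}$ from the paragraph before Lemma~\ref{lem:2.2}, for $\bi=(i_1,i_2,i_3),\bj=(j_1,j_2,j_3)\in S_3(n)$ and $\Gs=\Gs_{a,b}\in\AGL(F/\F_q)$ the equality $\bi_\bGa^{\Gs}=\bj_\bGa$ is equivalent to the system $\Ga_{i_\ell}=a\Ga_{j_\ell}+b$ for $\ell=1,2,3$. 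Thus a vector $\bGa$ with pairwise distinct coordinates violates (ii) exactly when there are $\bi,\bj\in S_3(n)$ with $d_H(\bi,\bj)\ge 2$ and some $\Gs=\Gs_{a,b}\neq\mathrm{id}$ solving this system; and by Lemma~\ref{lem:2.2}(i),(ii) any two of the three equations already determine $(a,b)$, so the automorphism $\Gs$ is itself determined by the pair $(\bi,\bj)$ and by $\bGa$.

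I would then choose $\Ga_1,\Ga_2,\dots$ in succession, maintaining the invariant that $\Ga_1,\dots,\Ga_m$ are pairwise distinct and that no pair $\bi,\bj\in S_3(m)$ with $d_H\ge 2$, together with a non-identity automorphism, solves the system using only these coordinates. When $\Ga_m$ is to be chosen, any newly created solution must come from a pair $(\bi,\bj)$ all of whose indices are $\le m$ with $m$ the largest; then $m$ occupies the third slot of whichever of $\bi,\bj$ contains it, so all four entries in slots $1$ and $2$ have index $<m$ and refer to already-fixed coordinates. The slot-$1$ and slot-$2$ equations then determine $(a,b)$ (they are consistent and non-degenerate because the relevant coordinates are distinct), and the leftover slot-$3$ equation becomes a constraint on $\Ga_m$ alone that forbids at most one value of it — and forbids none when the determined automorphism turns out to be the identity, for which no collision can occur since $\bi\neq\bj$. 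Hence each such pair rules out at most one value of $\Ga_m$, while the distinctness requirement rules out at most $m-1$ further values.

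It remains to count the pairs $(\bi,\bj)\in S_3(m)^2$ with $d_H\ge 2$ whose largest index is $m$. If $m=i_3$, then $\bi$ is fixed by its two smaller entries while $\bj$ ranges over $S_3(m)$; if $m=j_3>i_3$, then symmetrically $\bj$ is fixed by its two smaller entries and $\bi\in S_3(m-1)$. This gives at most $\binom{m-1}{2}\binom m3+\binom{m-1}{2}\binom{m-1}{3}=\dfrac{(m-1)^2(m-2)^2(2m-3)}{12}$ such pairs, so a valid $\Ga_m$ exists as long as
\[
q>\frac{(m-1)^2(m-2)^2(2m-3)}{12}+(m-1).
\]
The right-hand side increases with $m$, and at $m=n$ it is at most $\dfrac{n(n-1)^2(n-2)^2}{4}$ for $n\ge 4$ (the difference equals $(n-1)^2(n-2)^2\tfrac{n+3}{12}-(n-1)$, which is positive). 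Therefore the hypothesis $q>\dfrac{n(n-1)^2(n-2)^2}{4}$ carries the construction all the way to $\Ga_n$, producing a vector with properties (i) and (ii).

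The step I expect to be the main obstacle is the claim in the second paragraph that each relevant pair forbids only one value of $\Ga_m$: one must check, uniformly over where the coincidences between $\bi$ and $\bj$ occur and over whether $m$ lies in one of the triples or in both, that the slot-$1$ and slot-$2$ equations involve only already-chosen coordinates — which is exactly where $d_H(\bi,\bj)\ge 2$ and Lemma~\ref{lem:2.2}(ii) enter — and that the leftover slot-$3$ equation is genuinely linear (and not identically satisfied) in $\Ga_m$. Everything else is elementary counting.
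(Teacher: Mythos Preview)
Your proposal is correct and follows the same overall strategy as the paper: build $\bGa$ greedily (inductively) one coordinate at a time, and at each step bound the number of values that must be avoided so that the hypothesis on $q$ guarantees a valid choice exists.

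The only noteworthy difference is in how the forbidden values are parametrized and counted. The paper indexes the avoidance set by automorphisms: at step $n$ it runs over all $\Gs_{(\Ga_i,\Ga_j)\mapsto(\Ga_k,\Ga_\ell)}$ with $1\le i<j\le n-1$, $1\le k<\ell\le n-1$ and forbids $\Gs(P_{\Ga_t})$ for every $t\le n-1$ together with the fixed point of $\Gs$, yielding the bound $\binom{n-1}{2}^2\,n=\frac{n(n-1)^2(n-2)^2}{4}$ directly. You instead index by pairs $(\bi,\bj)$ whose largest entry is the new index $m$, extract the unique automorphism from the first two slots, and observe that each such pair forbids at most one value of $\Ga_m$; this gives the sharper count $\frac{(m-1)^2(m-2)^2(2m-3)}{12}+(m-1)$, which you then majorize by the paper's expression. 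Your reformulation of $\bi_\bGa^\Gs=\bj_\bGa$ as the linear system $\Ga_{i_\ell}=a\Ga_{j_\ell}+b$ is exactly the content of Lemma~\ref{lem:2.2}, so the two arguments are really the same proof with different bookkeeping; yours is a bit more explicit about the case split (whether $m$ lies in one triple or both) and slightly less wasteful in the count, while the paper's version hides the case analysis inside the union over images and fixed points.
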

\begin{proof}
Let us prove by induction on $n$.

For $n=4$,
we choose any three pairwise distinct elements $\Ga_1,\Ga_2,\Ga_3\in\F_q$. Choose $P_{\Ga_4}\in\PP_q\setminus\left(\{\Gs_{(\Ga_i,\Ga_j)\mapsto(\Ga_k,\Ga_\ell)}(P_{\Ga_t})\}\cup \textsf{F}(\Gs_{(\Ga_i,\Ga_j)\mapsto(\Ga_k,\Ga_\ell)})\right)_{1\le i<j\le 3, 1\le k<\ell\le 3, 1\le t\le 3}$.  Note that this is possible since
{\small \[\left| \PP_q\setminus\left(\{\Gs_{(\Ga_i,\Ga_j)\mapsto(\Ga_k,\Ga_\ell)}(P_{\Ga_t})\}\cup \textsf{F}(\Gs_{(\Ga_i,\Ga_j)\mapsto(\Ga_k,\Ga_\ell)})\right)_{1\le i<j\le 3, 1\le k<\ell\le 3, 1\le t\le 3} \right|\ge q-{3\choose2}^2\times (1+3)>0.\]}
Then we can see that $\bi_{\bGa_1}^\Gs\neq\bj_{\bGa_1}$ for any $\Gs\in \AGL(F/\F_q)$ and any two vectors $\bi,\bj\in S_3(4)$ with $d_H(\bi,\bj)\ge2$, where $\bGa_1=(\Ga_1,\dots,\Ga_4)$.

Now assume that we have found  a vector $\{\Ga_1,\Ga_2,\dots,\Ga_{n-1}\}$ satisfying the two conditions given in this lemma. Choose
\[P_{\Ga_n}\in\PP_q\setminus\left(\{\Gs_{(\Ga_i,\Ga_j)\mapsto(\Ga_k,\Ga_\ell)}(P_{\Ga_t})\}\cup \textsf{F}(\Gs_{(\Ga_i,\Ga_j)\mapsto(\Ga_k,\Ga_\ell)})\right)_{1\le i<j\le n-1, 1\le k<\ell\le n-1, 1\le t\le n-1}.\]
 Note that this is possible since the set \[\left(\{\Gs_{(\Ga_i,\Ga_j)\mapsto(\Ga_k,\Ga_\ell)}(P_{\Ga_t})\}\cup \textsf{F}(\Gs_{(\Ga_i,\Ga_j)\mapsto(\Ga_k,\Ga_\ell)})\right)_{1\le i<j\le n-1, 1\le k<\ell\le n-1, 1\le t\le n-1}\] has at most ${{n-1}\choose 2}^2(1+n-1)=\frac{n(n-1)^2(n-2)^2}4$ elements.

Thus, we have that $\bi_\Ga^\Gs\neq\bj_\Ga$ for any $\Gs\in \AGL(F/\F_q)$ and any two vectors $\bi,\bj\in S_3{(n)}$ with $d_H(\bi,\bj)\ge2$.
\end{proof}

\begin{theorem}\label{thm:4.8} Let $n\ge 4$.
If $q>\frac{n(n-1)^2(n-2)^2}4$, then there exists an evaluation vector $\bGa$ such that the $2$-dimensional Reed-Solomon code $\RS_\bGa(n,2)$ has insdel distance $2n-4$. Furthermore, the code can be constructed in polynomial time.
\end{theorem}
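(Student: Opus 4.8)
The plan is to deduce the statement directly from Lemma~\ref{lem:4.7} and Lemma~\ref{lem:4.6}, and then to observe that the inductive construction inside the proof of Lemma~\ref{lem:4.7} already runs in polynomial time. First, since the hypothesis $q>\frac{n(n-1)^2(n-2)^2}{4}$ is exactly the hypothesis of Lemma~\ref{lem:4.7}, that lemma supplies an evaluation vector $\bGa=(\Ga_1,\dots,\Ga_n)\in\F_q^n$ with pairwise distinct coordinates for which $\bi_\bGa^\Gs\neq\bj_\bGa$ holds for every $\Gs\in\AGL(F/\F_q)$ and every pair $\bi,\bj\in S_3(n)$ with $d_H(\bi,\bj)\ge 2$. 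By Lemma~\ref{lem:4.6} this is precisely the criterion for $\RS_\bGa(n,2)$ to have insdel distance $2n-4$, so this code does the job; and since Corollary~\ref{cor:3.7} shows that no $[n,2]$-linear code can have insdel distance larger than $2n-4$, this value is in fact the largest possible. Finally $\frac{n(n-1)^2(n-2)^2}{4}=O(n^5)$, and choosing $q$ to be, say, the least prime power above this bound keeps $q=O(n^5)$.

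For the polynomial-time assertion I would argue that the greedy induction in Lemma~\ref{lem:4.7} is effective. One begins with arbitrary distinct $\Ga_1,\Ga_2,\Ga_3$, and at stage $m$ (for $4\le m\le n$), having already fixed $\Ga_1,\dots,\Ga_{m-1}$, one must pick $\Ga_m$ so that $P_{\Ga_m}$ avoids the set
\[
\left(\{\Gs_{(\Ga_i,\Ga_j)\mapsto(\Ga_k,\Ga_\ell)}(P_{\Ga_t})\}\cup\textsf{F}(\Gs_{(\Ga_i,\Ga_j)\mapsto(\Ga_k,\Ga_\ell)})\right)_{1\le i<j\le m-1,\ 1\le k<\ell\le m-1,\ 1\le t\le m-1},
\]
which has at most $\binom{m-1}{2}^2 m=O(n^5)$ elements and which, by the inclusion of the identity automorphism, already contains $P_{\Ga_1},\dots,P_{\Ga_{m-1}}$. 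Each interpolating automorphism $\Gs_{(\Ga_i,\Ga_j)\mapsto(\Ga_k,\Ga_\ell)}=\Gs_{a,b}$ is obtained, per Lemma~\ref{lem:2.2}(ii) and the linear system~\eqref{eq:6}, by solving a $2\times 2$ system over $\F_q$ for $(a,b)$; its image on any point, $\Gs_{a,b}(P_\Ga)=P_{a^{-1}(\Ga-b)}$, and its unique fixed point (the solution of $(1-a)x=b$, when $a\neq 1$) are then computed with a bounded number of field operations. Hence the forbidden set can be enumerated explicitly in $O(n^5)$ arithmetic operations in $\F_q$, and because $q$ strictly exceeds its size, scanning $\F_q$ yields a valid $\Ga_m$. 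Iterating over $m=4,\dots,n$ produces $\bGa$, and hence a generator matrix of $\RS_\bGa(n,2)$, in time polynomial in $n$ and $\log q$.

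I do not expect a genuine obstacle: the real content is carried by Lemmas~\ref{lem:4.6} and~\ref{lem:4.7}. The only step needing care is verifying that every object in the forbidden set --- the interpolating automorphisms, the images of the evaluation points under them, and their fixed points --- is given by an explicit closed form rather than by an unbounded search, which is exactly what turns the existence argument of Lemma~\ref{lem:4.7} into a polynomial-time algorithm; this is immediate from the explicit description of $\AGL(F/\F_q)$ recalled in Section~2.
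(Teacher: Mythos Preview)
Your proof is correct and follows exactly the paper's approach: the paper's own proof is the single line ``The desired result follows from Lemmas~\ref{lem:4.6} and~\ref{lem:4.7},'' and you invoke the same two lemmas. Your additional paragraph unpacking the polynomial-time claim (explicitly computing each $\Gs_{(\Ga_i,\Ga_j)\mapsto(\Ga_k,\Ga_\ell)}$ via the $2\times 2$ system, enumerating the $O(n^5)$ forbidden points, and picking a fresh element) is a welcome elaboration that the paper leaves implicit; just note that ``scanning $\F_q$'' should be read as scanning at most $O(n^5)+1$ candidate elements, not all of $\F_q$, so that the running time is indeed polynomial in $n$ and $\log q$ regardless of how large $q$ is.
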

\begin{proof}
The desired result follows from Lemmas \ref{lem:4.6} and  \ref{lem:4.7}.
\end{proof}

\section*{Acknowledgment}
The authors are grateful to  Venkat Guruswami for discussions during writing of the present paper.

\end{document}